\newcommand{\mylabel}[2]{#2\def\@currentlabel{#2}\label{#1}}
\newtheorem*{theorem-no}{Theorem}
\newcommand{\comment}[1]{}  
\newcounter{pulse}[section]
\numberwithin{pulse}{section}  
\newcommand{\thf}{\sc} 
\theoremstyle{plain}
\newtheorem{theorem}[pulse]{\thf Theorem}
\newtheorem{proposition}[pulse]{\thf Proposition}
\newtheorem{corollary}[pulse]{\thf Corollary}
\theoremstyle{definition}
\newtheorem{definition}[pulse]{\thf Definition}
\newtheorem*{notation}{\thf Notation}
\newtheorem{example}[pulse]{\thf Example}
\theoremstyle{remark}
\newtheorem{remark}[pulse]{\thf Remark}
\def\t{\rm t}
\def\id{\rm id}
\def\t{\rm t}
\def\bbC{\mathbb C}
\def\bbG{\mathbb G}
\def\bbZ{\mathbb Z}
\def\bb1{\mathds 1}
\newcommand{\bbS}{\mathbb{S}}
\newcommand{\C}{\mathbb{C}}
\newcommand{\la}{\langle}
\newcommand{\ra}{\rangle}
\def\U{\mathcal U}
\def\E{\mathcal E}
\newcommand{\red}[1]{{\color{black} #1}}
\renewcommand\paragraph[1]{\par\vspace{1em}\noindent\textbf{#1}}
\begin{document}
\begin{frontmatter}
\title{Frames for signal processing on Cayley graphs}
\author[mymainaddress]{Kathryn Beck}
\author[mymainaddress]{Mahya Ghandehari\corref{mycorrespondingauthor}}
\cortext[mycorrespondingauthor]{Corresponding author}
\ead{mahya@udel.edu}
\author[mysecondaryaddress]{Skyler Hudson}
\author[mymainaddress]{Jenna Paltenstein}
\address[mymainaddress]{Department of Mathematical Sciences, University of Delaware, Newark, DE, USA, 19716}
\address[mysecondaryaddress]{Department of Mathematics \& Statistics, University of Massachusetts at Amherst, 
Amherst, MA, USA, 01003}
%
%
\begin{abstract}
The spectral decomposition of graph adjacency matrices is an essential ingredient  in the design of graph signal processing (GSP) techniques.
When the adjacency matrix has multi-dimensional eigenspaces, it is desirable to base GSP constructions on a particular eigenbasis \red{that better reflects the graph's symmetries}.
In this paper, we provide an explicit and detailed representation-theoretic account for the spectral decomposition of the adjacency matrix of a weighted Cayley graph.
Our method applies to all weighted Cayley graphs, regardless of whether they are quasi-Abelian, and offers detailed descriptions of eigenvalues and eigenvectors derived from the coefficient functions of the representations of the underlying group. 
\red{Next, we turn our attention to constructing frames on Cayley graphs.
Frames are overcomplete spanning sets that ensure stable and potentially redundant systems for signal reconstruction.
We use our proposed eigenbases to build frames that are suitable for developing signal processing on Cayley graphs.} 
These are the Frobenius--Schur frames and Cayley frames, for which we provide a characterization and a practical recipe for their construction. 
\end{abstract}
\begin{keyword}
Cayley graph\sep graph signal processing \sep graph frame
\MSC[2020] 05C50 \sep 05C25 \sep 94A12 
\end{keyword}
\end{frontmatter}

\section{Introduction}\label{section:intro}
Graph signal processing (GSP) is a fast-growing field that offers a framework for developing signal processing techniques tailored for signals that are defined on graphs, with the objective of incorporating the underlying graph structure into the analysis.
For a fixed graph $G$ with vertex set $V$, a graph signal on $G$ is a complex-valued function $f:V\to \bbC$. If the vertex set $V$ is labeled, say $\{v_i\}_{i=1}^N$, then the graph signal can be represented as a column vector $\left[f(v_1), f(v_2),\ldots, f(v_N)\right]^{\t}$ in ${\mathbb C}^N$, where $\t$ denotes the matrix transpose.  
A powerful technique to analyze graph signals that has gained significant popularity over the recent years involves fixing a basis of eigenvectors for an appropriate matrix associated with the graph; we call such a basis a \emph{graph Fourier basis}. Expanding graph signals in this basis leads to the idea of Fourier analysis on graphs. The reason for this approach is to improve signal processing efficiency by working with a basis that encodes the structural features of the underlying graph, rather than an arbitrary basis of ${\mathbb C}^N$. 

Prominent examples of matrices associated with graphs are the graph adjacency matrix and the graph Laplacian.
\red{In this manuscript, we focus on weighted graphs, where each edge is assigned a numerical value known as a \emph{weight}, while non-edges are assigned a weight of zero.
The adjacency matrix of a weighted graph $G$ with $N$ nodes is the matrix $A_G$ of size $N$, whose $(i,j)$-th entry is precisely the weight of the edge joining nodes $i$ and $j$.
The Laplacian of $G$, denoted by $L_G$, is an $N\times N$ matrix defined as $L_G=D_G-A_G$, where $D_G$ is the diagonal matrix with entries $d_{ii}$ representing the degree of the vertex $v_i$ (the sum of the weights of edges incident to $v_i$).
We note that unweighted graphs can be viewed as a specific type of weighted graphs in which each edge is assigned a weight of 1.}
%

\red{Consider either the graph adjacency matrix $A_G$ or the graph Laplacian $L_G$; these matrices are often referred to as the \emph{graph shift operator}}. Next, fix an orthonormal basis of eigenvectors $\phi_1,\ldots,\phi_N\in \bbC^N$ associated with (possibly repeated) eigenvalues $\lambda_1,\ldots,\lambda_N$ for that matrix. 
The {\it graph Fourier transform} $\widehat{f}$ of a graph signal $f:V\to \bbC$ is defined to be the expansion of $f$ in terms of this orthonormal basis. Namely, 
\begin{equation*}\label{GFT}
\widehat{f}(\phi_i) = \langle f, \phi_i\rangle = \sum_{n=1}^N f(v_n)\overline{\phi_i(v_n)}.
\end{equation*}
The corresponding {\it inverse Fourier transform} is given by
\begin{equation*}\label{GIFT}
f(v_n) = \sum_{i=1}^{N} \widehat f (\phi_i) \phi_i(v_n).
\end{equation*}
Here, $\langle\cdot,\cdot\rangle$ denotes the inner product on $\bbC^N$. 
It is worth mentioning that the eigenvectors $\{\phi_i\}_{i=1}^N$ can be chosen from ${\mathbb R}^N$, 
since both adjacency and Laplacian matrices are symmetric and real-valued, and therefore, they both have real spectrum, and are diagonalizable over ${\mathbb R}$. However, we do not impose such a restriction on our choice of eigenbasis. In fact, we will show that in certain cases, a complex eigenbasis may provide us with a more efficient Fourier analysis (see e.g.~Proposition~\ref{prop:eigenvector} and Corollary~\ref{cor:constant-on-conjugacy}).
We refer to~\cite{2013:Sandryhaila:DS,2014:Sandryhaila:BD,SNFOV} for a detailed background on the graph Fourier transform, to ~\cite{2018:Ortega:GSPOverview,ortega2022} for a general overview of graph signal processing, and to \cite{Ruizetal,GJK-2022} for some new developments in signal processing on large graphs and graphons.

Taking the graph Fourier transform as defined above as a first step, a significant body of research has been devoted to generalizing classical tools from Fourier analysis to the case of signals defined on graphs. Important examples of such efforts include wavelet constructions (e.g.~\cite{Hammond:2011:WaveletsOnGraphViaSpectralGraphTheory,2009:Jansen:MultiscaleMethodGraphData,2014:Chui:RepsFunctionsBigData}), frame constructions (e.g.~\cite{2013:Shuman:SpectrumAdaptedWavelet,Shuman:2016:VertexFrequencyAnalysisOnGraphs}), constructions of wavelet-type frames (e.g.~\cite{2017:Dong:SparseRepWavelet,2018:Gobel:TightFramesDenoising,2013:Leonardi:TightWaveletFrames,2013:Shuman:SpectrumAdaptedWavelet}),
and constructions of Gabor-type frames (e.g.~\cite{2016:Behjat:SignalAdaptedFrames,  2021:Ghandehari:GaborTypeFrames})
for graph signals.  
Normally, wavelet and frame constructions, as well as many other signal processing concepts, rely heavily on the choice of the Fourier basis. Thus,  a suitable selection of eigenbasis for the graph Fourier transform plays a critical role in the success of this theory. 
The significance of this phenomenon is accentuated when dealing with a graph (adjacency or Laplacian) matrix with high-dimensional eigenspaces. A prominent example of such a scenario is the case of a Cayley graph, particularly one that is associated with a non-Abelian group.

\red{A weighted Cayley graph has vertices corresponding to elements of a group $\bbG$ and weighted edges generated by an inverse-invariant function $w:\bbG\to [0,\infty)$, called a weight function (see Definition~\ref{def-cayley}).  When $w$ is $\{0,1\}$-valued, this definition reduces to the definition of an unweighted Cayley graph.}
The underlying algebraic structure and highly symmetric nature of (weighted) Cayley graphs make them a rich category of graphs for various applications, leading to the need for the advancement of further graph signal processing techniques for this class. 
For examples of signal processing on Cayley graphs, see \cite{Rockmore, paper} for the case of Cayley graphs of the symmetric group, and see \cite{GSP-circulant2019,Kotzagiannidis2018,Chepuri2017} for the case of circulant graphs, which are Cayley graphs on $\bbZ_n$. 

For Cayley graphs (or any regular graph in general), the eigenbasis of the associated adjacency and Laplacian matrices are identical. 
So, in this article, we focus our attention only to adjacency matrices. 
Eigen-decompositions of adjacency matrices of Cayley graphs are well-understood when the group is Abelian (\cite{1979:Babai:Spectraofcayleygraphs}) or \red{the Cayley graph is} quasi-Abelian,\footnote{\red{The terminology ``quasi-Abelian'' is somewhat misleading, as it does not pertain to the underlying group of a weighted Cayley graph; instead, it relates to the conjugation-invariance of its weight function $w$ (see the definition before Corollary~\ref{cor:constant-on-conjugacy}). }}
meaning the generating set is closed under conjugation (\cite{Rockmore}). 
In \cite{MDK:2019:Sampta}, the second author and collaborators use representation theory of groups to construct a suitable Fourier basis (i.e.,~eigenbasis of the graph adjacency matrix) for signal processing on quasi-Abelian Cayley graphs.
They contend that the particular eigenbasis constructed through representation theoretic considerations is more suitable for developing the Fourier transform on a Cayley graph. For example, using these eigenbases simplifies several operations on graph signals including the graph translation operator.
Moreover, they show that such eigenbases can be used to construct a suitable family of tight frames.

In \cite{paper}, Chen et al.~take a similar viewpoint when studying ranked data sets as signals on the permutahedron.
The permutahedron, denoted by $\mathbb P_n$, is the Cayley graph of the symmetric group with the generating set of adjacent transpositions. 
Selecting the permutahedron as the underlying graph is crucial for the success of their theory, as the generating set of the permutahedron captures a specific notion of distance that is useful in the context of ranked voting systems.
%
Chen et al.~construct a special basis of eigenvectors for the vector space $\ell^2(\bbS_n)$ that is compatible with both irreducible representations of $\bbS_n$ and eigenspaces of the adjacency matrix of ${\mathbb P}_n$.
They use this basis to form a frame (i.e., an overcomplete spanning set) for $\ell^2(\bbS_n)$ that is suitable for GSP on the permutahedron.
The significance of obtaining such a frame is that the analysis coefficients, which are the inner products of vectors in $\ell^2(\bbS_n)$ with frame elements, provide specific interpretations of the ranked data
(e.g.~popularity of candidates, when a candidate is polarizing, and when two candidates are likely to be ranked similarly).
This example highlights the importance of constructing appropriate Fourier bases for (not necessarily quasi-Abelian) Cayley graphs.

In \cite{EUSIPCO}, the first two authors generalize the results of Chen et al.~to all Cayley graphs on $\bbS_n$. 
Namely, they introduce a class of frames, called Frobenius--Schur frames, which have the property that every frame vector belongs to the coefficient space of only one irreducible representation of the symmetric group. 
Furthermore, they characterize all Frobenius--Schur frames on the group algebra of the symmetric group which are `compatible' with respect to both the generating set and the representation theory of the group. 
They observe that frames obtained in \cite{paper} are exactly such compatible Frobenius--Schur frames (which we call Cayley frames in this paper); see Subsection~\ref{subsec:relation-to-paper} for a detailed explanation.  

\subsection{Main contribution}
In the present article, we take the perspective in \cite{MDK:2019:Sampta} to extend the results of \cite{EUSIPCO} to general \red{weighted} Cayley graphs; that is weighted Cayley graphs on any group $\bbG$, with any weight function. This includes the case of unweighted Cayley graphs on any group with any inverse-closed generating set.
The definitions of Frobenius--Schur frames and frames compatible with a generating set can be naturally extended in the case of general \red{weighted} Cayley graphs. Frobenius--Schur frames are those that are compatible with the representation theory of the underlying group $\bbG$; whereas, we introduce weighted Cayley frames with respect to a given weight function $w$ as those that are compatible with $w$ and the representation theory of $\bbG$ (Definition~\ref{def:frame-compatible}). 

\red{Our contribution in this paper is two-fold. Firstly, we provide a complete description of the eigen-decomposition of the adjacency matrix of a weighted Cayley graph in terms of the irreducible representations of its underlying group (Proposition~\ref{prop:eigenvector}). This work generalizes existing results on the spectral decomposition of Cayley graphs on Abelian groups \cite{1979:Babai:Spectraofcayleygraphs} and quasi-Abelian Cayley graphs \cite{Rockmore}.
Secondly, we characterize} all (weighted) Cayley frames (with respect to a given weight function) of $\ell^2(\bbG)$, and provide a concrete recipe for constructing such frames (Theorem~\ref{thm: frame}).
Given their compatibility with both the group and the weight function, these are suitable frames for signal processing on weighted Cayley graphs.
\red{Our frame construction in Theorem~\ref{thm: frame} is based on the particular eigenbasis provided in Proposition~\ref{prop:eigenvector}, where the representation theory of the underlying group guides the choice of basis elements. Namely, in a weighted Cayley frame, each vector belongs to the coefficient space of only one irreducible representation of the underlying group. This is akin to the selection of the Fourier basis in classical signal processing; indeed, the classical Fourier basis contains precisely one coefficient function for each irreducible representation of ${\mathbb Z}_n$.}
%


\red{Our representation theoretic viewpoint, particularly through the application of Frobenius--Schur theory, offers a significant benefit in our approach, as it provides a block diagonalization of the adjacency matrix resulting in much smaller block sizes than the original matrix. For instance, in Example~\ref{example:P4} we discuss a Cayley graph on the symmetric group $\bbS_4$. The associated adjacency matrix in this case is of size 24, but our method only requires the eigen-decomposition of matrices of size at most 3. Additionally, the block-diagonalizing unitary matrix from Frobenius--Schur theory depends solely on the underlying group of a weighted Cayley graph}. This feature greatly enhances the computational efficiency of our proposed method, as \red{the same block-diagonalizing unitary matrix} works for any \red{weighted} Cayley graph over a given group.

\subsection{Organization of the paper}
The rest of this article is organized as follows. In Section~\ref{subsec:notations-rep}, we present the necessary background for the representation theory of finite groups, along with the Schur orthogonality relations and Frobenius--Schur decomposition theorem. In Section~\ref{subsec:eigen-decomp of A}, we use the Frobenius--Schur theorem to provide a complete description of the eigen-decomposition of the adjacency matrix of a weighted Cayley graph in terms of the irreducible representations of its underlying group (Proposition~\ref{prop:eigenvector}). 
As a corollary to Proposition~\ref{prop:eigenvector}, we  provide a new proof for the quasi-Abelian case. We also provide  examples of applications of Proposition~\ref{prop:eigenvector}  for the case of the permutahedrons, $\mathbb P_3$ and $\mathbb P_4$. In Section~\ref{subsec:FS-frames-Cayley-frames}, we use the results of Section~\ref{subsec:eigen-decomp of A} to introduce and characterize weighted Cayley frames, i.e., Frobenius--Schur frames compatible with the weight function of a Cayley graph. We also provide an example of a Cayley frame for $\ell^2(\bbS_4)$. In Section~\ref{subsec:properties} we discuss properties inherited by the larger Cayley frame from the smaller frames, including tight/parseval frames and the restricted isometry property. Next, we show how our frame construction for general Cayley graphs relates to the frames for the permutahedron built in \cite{paper} (Section~\ref{subsec:relation-to-paper}). We end the paper with \ref{Appendix:proofs} which provides a proof of Proposition~\ref{prop:eigenvector}, and \ref{Appendix:rep-Sn} which gives an overview of the theory behind finding the irreducible representations of $\mathbb S_n$. We have included \ref{Appendix:rep-S3} and \ref{Appendix:rep-S4} with matrix representations and coefficient functions of $\bbS_3$ and $\bbS_4$. While this information is known, we have included it here for the ease of the reader, as finding references presenting this material in a suitable form for us turned out to be challenging. 


\section{Notations and background}\label{section:notations}

\subsection{Representation theory of finite groups}\label{subsec:notations-rep}
In this subsection, we provide the necessary background for the representation theory of finite groups and their associated function spaces. 
We restrict our attention to unitary representations. This is non-consequential as every representation of a finite group is unitizable by a change of inner product on the representation space (see for example  \cite[Section 1.3]{Serre}).
Throughout this article, let $\bbG$ be a finite (not necessarily Abelian)  group of size $N$. A unitary representation of $\bbG$ of dimension $d$ is a group homomorphism $\pi: \bbG\rightarrow \U_d(\bbC)$, where $\U_d(\bbC)$ denotes the (multiplicative) group of unitary matrices of size $d$. \red{For a representation $\pi: \bbG\rightarrow \U_d(\bbC)$, a subspace $W$ of $\C^d$, and an element $g\in \bbG$, define
$$\pi(g)W:=\{\pi(g)\xi: \ \xi\in W\}.$$ 
The subspace $W$ is called $\pi$-invariant if for all $g\in \bbG$, we have $\pi(g)W\subseteq W$.}
Given a $\pi$-invariant subspace $W$, we define the subrepresentation $\pi|_{W}$ of $\pi$ to be the representation of ${\mathbb G}$ on the inner product space $W$ obtained by restricting both the domain and codomain of $\pi(\cdot)$ to $W$.
A representation $\pi$ is called irreducible if $\{0\}$ and $\bbC^d$ are its only $\pi$-invariant subspaces. \red{We say two representations $\pi$ and $\sigma$ of $\bbG$ are unitarily equivalent if there exists a unitary matrix $U$ such that $U^{-1}\pi(g)U=\sigma(g)$ for all $g\in \bbG$.} We let $\widehat{\bbG}$ denote the collection of all (equivalence classes of) irreducible unitary representations of $\bbG$. In the case of an Abelian group, every irreducible representation of $\bbG$ is one-dimensional \cite[Corollary 3.6]{1995:Folland:HarmonicAnalysis},  and $\widehat{\bbG}$ reduces to the group of characters on $\bbG$. 
Note that every unitary representation of a finite group decomposes into a direct sum of irreducible representations in $\widehat{\bbG}$ \cite[Theorem 5.2]{1995:Folland:HarmonicAnalysis}.

An important representation of a group $\bbG$ is its \emph{right regular representation}
$\rho:\bbG\rightarrow \U_N(\bbC)$, where $\rho(g)$ denotes the matrix associated with the permutation $h\mapsto hg^{-1}$, for every $h\in \bbG$. 
The representation $\rho$ \red{is unitarily equivalent to the representation $\rho'$ on the vector space $\ell^2(\bbG)$, defined as follows:}
\begin{equation*}
    \rho':\bbG\rightarrow \U(\ell^2(\bbG)), \quad\rho'(x)f(y)=f(yx), \ \forall f\in \ell^2(\bbG), \ \forall x,y\in \bbG.
\end{equation*}
Here, $\ell^2(\bbG)$ is equipped with the inner product $\langle f,g\rangle_{\ell^2(\bbG)}=\sum_{x\in\bbG} f(x)\overline{g(x)}$, and 
$\U(\ell^2(\bbG))$ denotes the set of unitary operators on the inner product space $\ell^2(\bbG)$. For the rest of this article, we assume that the group elements are labeled as $\bbG=\{g_1,\ldots, g_N\}$. So, the unitary operator $U:\ell^2(\bbG)\to \bbC^N$ defined as $U(f)=\Big[f(g_1),\ldots, f(g_N)\Big]^{\t}$, where $\t$ denotes the matrix transpose, provides the unitary equivalence between $\rho$ and $\rho'$.

When $N>1$, the representation $\rho$ is not irreducible, and thus can be decomposed into a direct sum of irreducible representations. 
In many applications, such as those studied in this paper, it is important to understand the above direct sum decomposition of $\rho$ in concrete terms. 
For general (not necessarily finite) compact groups, the Peter--Weyl theorem provides us with a convenient way to tackle this task. 
This theorem, which was first proved by  Frobenius and Schur for the case of finite groups, gives rise to a decomposition of $\bbC^N$ (respectively $\ell^2(\bbG)$) into spaces generated by coefficient functions.

For an arbitrary $\pi\in\widehat{\bbG}$ of dimension $d_\pi$, and vectors $\xi,\eta\in\bbC^{d_\pi}$, we define the \emph{coefficient function} associated with the representation $\pi$ and the vectors $\xi, \eta$ as follows:
$$\pi_{\xi,\eta}:\bbG\to \bbC, \ \ \pi_{\xi,\eta}(g)=\langle\pi(g)\xi, \eta\rangle, \ \forall g\in \bbG.$$
For the rest of this article, we view $\pi_{\xi,\eta}$ as a function on $\bbG$ or as a vector in $\bbC^N$ interchangeably. Namely, we sometimes think of $\pi_{\xi,\eta}$ as 
$$\pi_{\xi,\eta}=\Big[\langle\pi(g_1)\xi, \eta\rangle, \ldots, \langle\pi(g_N)\xi, \eta\rangle\Big]^{\t}.$$
When $\{e_i\}_{i=1}^{d_\pi}$ is the standard orthonormal basis for $\bbC^{d_\pi}$, the coefficient functions 
$$\pi_{i,j}(x):=\pi_{e_i, e_j}(x)=\la \pi(x)e_i,e_j \ra, \ i,j=1,\ldots, d_\pi$$
indicate the entries of the matrix of $\pi(x)$ represented in the same basis. 
Coefficient functions play a central role in the harmonic analysis of non-Abelian groups. 
For $\pi\in\widehat{\bbG}$ and $1\leq i\leq d_\pi$, define
\begin{equation}
\label{eq:def-E-pi,i}
 {\cal E}_{\pi,i}=\big\{\pi_{\xi,e_i}:\xi\in \bbC^{d_\pi}\big\}.
\end{equation}
\red{For $\pi_{\xi,e_i}\in {\cal E}_{\pi,i}$ and $g,x\in \bbG$, we have
$$\rho(g)\pi_{\xi,e_i}(x)=\pi_{\xi,e_i}(xg)=\langle\pi(xg)\xi,e_i\rangle=\langle\pi(x)\pi(g)\xi,e_i\rangle=\pi_{\pi(g)\xi,e_i}(x).$$
So for every $g\in\bbG$, we get $\rho(g){\cal E}_{\pi,i}\subseteq {\cal E}_{\pi,i}$. Consequently,} every set  ${\cal E}_{\pi,i}$ forms a $\rho$-invariant subspace of $\bbC^N$. 
This fact is used in the well-known Frobenius--Schur theorem,  stating the decomposition of $\rho$. 
Important pieces for the Frobenius--Schur theorem are the following orthogonality relations (\cite[Theorem 3.34]{1995:Folland:HarmonicAnalysis}).
\begin{theorem}[The Schur orthogonality relations]\label{thm:schur-ortho}
Let $\pi,\sigma$ be irreducible unitary representations of $\bbG$, and consider ${\cal E}_{\pi,i}$ and ${\cal E}_{\sigma, j}$ as subspaces of $\bbC^N$. 
\begin{enumerate}
    \item If $\pi$ and $\sigma$ are not unitarily equivalent then ${\cal E}_{\pi,i} \perp {\cal E}_{\sigma,j}$ for all $1\leq i\leq d_\pi$ and $1\leq j\leq d_\sigma$.
    \item If $\{e_j\}_{j=1}^{d_\pi}$ is an orthonormal basis for $\bbC^{d_\pi}$ then $\left\{\sqrt{\frac{d_\pi}{|\bbG|}} \pi_{j,i}:j=1,..,d_\pi \right\}$ is an orthonormal basis for ${\cal E}_{\pi,i}$.
\end{enumerate}
\end{theorem}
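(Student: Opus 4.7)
The plan is to deduce both statements simultaneously from Schur's lemma via the standard averaging (intertwining) trick. Given any linear map $T:\bbC^{d_\sigma}\to\bbC^{d_\pi}$, I would form the group average
$$\widetilde T = \frac{1}{|\bbG|}\sum_{g\in\bbG} \pi(g)^{*}\,T\,\sigma(g),$$
and verify by a change of summation variable (replacing $g$ by $hg$) that $\pi(h)\widetilde T = \widetilde T\,\sigma(h)$ for every $h\in\bbG$. Thus $\widetilde T$ is an intertwiner from $\sigma$ to $\pi$, and Schur's lemma applies: either $\pi\not\simeq\sigma$ (whence $\widetilde T=0$), or $\pi=\sigma$ (whence $\widetilde T=c(T)\,I_{d_\pi}$ for some scalar $c(T)$). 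Taking traces in the second case and using invariance of trace under conjugation gives $c(T)=\Tr(T)/d_\pi$.

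For Part 1, I would compute the inner product $\langle \pi_{\xi,e_i},\sigma_{\xi',e_j}\rangle_{\bbC^N}$ by recognizing it as a matrix entry of $\widetilde T$ for the rank-one choice $T = e_i e_j^{*}$, where $e_i \in \bbC^{d_\pi}$ and $e_j \in \bbC^{d_\sigma}$. Concretely, after using unitarity of $\sigma(g)$ to rewrite $\overline{\langle\sigma(g)\xi',e_j\rangle}$ as $\langle e_j,\sigma(g)\xi'\rangle$, the sum over $g$ collapses into $\langle \widetilde T\,\xi',\xi\rangle$ (up to the factor $|\bbG|$). Since $\widetilde T=0$ when $\pi\not\simeq\sigma$, the two coefficient functions are orthogonal for every choice of $\xi,\xi'$, which is exactly $\mathcal{E}_{\pi,i}\perp\mathcal{E}_{\sigma,j}$.

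For Part 2, I would repeat the same calculation with $\sigma=\pi$ and $T=e_{j'} e_j^{*}$; the scalar $c(T)=\delta_{j,j'}/d_\pi$ then yields
$$\langle \pi_{j,i},\pi_{j',i'}\rangle_{\bbC^N} \;=\; \frac{|\bbG|}{d_\pi}\,\delta_{j,j'}\,\delta_{i,i'},$$
so in particular $\{\sqrt{d_\pi/|\bbG|}\,\pi_{j,i}\}_{j=1}^{d_\pi}$ is an orthonormal set inside $\mathcal{E}_{\pi,i}$. To see it spans $\mathcal{E}_{\pi,i}$, I would write an arbitrary $\xi\in\bbC^{d_\pi}$ as $\xi=\sum_j \langle\xi,e_j\rangle e_j$ and use linearity of $\pi_{\cdot,e_i}$ in the first slot to express $\pi_{\xi,e_i}$ as a linear combination of the $\pi_{j,i}$; together with the orthonormality this finishes the claim.

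The only delicate step is the bookkeeping around conventions: making sure that $\pi_{j,i}(g)=\langle\pi(g)e_j,e_i\rangle$ corresponds to the $(i,j)$-entry of the matrix of $\pi(g)$ in the basis $\{e_k\}$, and that the adjoint $\pi(g)^{*}=\pi(g^{-1})$ lands on the correct side when converting $\overline{\pi_{\xi',e_j}(g)}$ to an inner product. Once those index conventions are pinned down, everything is formal manipulation, and the ``real'' content is concentrated in the single application of Schur's lemma to the averaged operator $\widetilde T$.
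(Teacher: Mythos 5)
The paper does not actually prove this theorem; it is quoted as background directly from Folland \cite[Theorem 3.34]{1995:Folland:HarmonicAnalysis}, so there is no in-paper argument to compare against. Your proposal is the standard averaging-plus-Schur's-lemma proof (essentially the one in the cited reference) and it is correct: the intertwining identity for $\widetilde T$, the trace computation giving $c(T)=\Tr(T)/d_\pi$, and the reduction of the $\ell^2$ inner products of coefficient functions to matrix entries of $\widetilde T$ all go through. The one point to tidy is the index bookkeeping you already flag: in Part 2 the rank-one operator should be chosen to pin the \emph{second} indices, i.e.\ $T=e_i e_{i'}^{*}$, so that Schur's lemma yields the factor $\delta_{i,i'}$ while the $(j,j')$ matrix entry of $\widetilde T=\tfrac{\delta_{i,i'}}{d_\pi}I$ yields $\delta_{j,j'}$, recovering $\la \pi_{j,i},\pi_{j',i'}\ra_{\bbC^N}=\tfrac{|\bbG|}{d_\pi}\delta_{j,j'}\delta_{i,i'}$ as stated.
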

This is to say that if $\pi$ and $\sigma$ are \red{not unitarily equivalent} then $\la \pi_{i,j},\sigma_{r,s}\ra_{\bbC^N}=0$ for all $1 \leq i, j \leq d_{\pi}$ and $1 \leq r,s \leq d_{\sigma}$ while 
\[\la \pi_{i,j} , \pi_{r,s} \ra_{\bbC^N} = \frac{|\bbG|}{d_{\pi}} \delta_{i,r} \delta_{j,s},  \ \mbox{ for all }  1 \leq i, j,r,s \leq d_{\pi},\]
where $\delta$ is the Kronecker delta function. 

We can now state the Frobenius--Schur theorem for finite groups.
\begin{theorem}[Frobenius--Schur theorem] \label{thm:frob-schur}
Let $\bbG$ be a finite group of size $N$. For ${\cal E}_{\pi,i}$ as given in Definition~\eqref{eq:def-E-pi,i}, we have 
$$\bbC^N=\oplus_{\pi \in \widehat{\bbG}}\oplus_{1\leq i\leq d_\pi} {\cal E}_{\pi,i},$$ where each ${\cal E}_{\pi,i}$ has the orthonormal basis
\[ \left\{\phi^{\pi}_{j,i}:=\sqrt{\frac{d_{\pi}}{N}} \pi_{j,i}: j=1,...,d_\pi\right\}.  \]
Moreover, for every $1\leq i\leq d_\pi$, 
the subrepresentation $\rho|_{{\cal E}_{\pi,i}}$ is \red{unitarily equivalent} to $\pi$. Consequently, $\rho$ \red{is unitarily equivalent to} $\oplus_{\pi\in\widehat{\bbG}}\ d_p\cdot\pi$; that is, 
each $\pi\in \widehat{\bbG}$ occurs in the right regular representation of $\bbG$ with multiplicity $d_{\pi}$.
\end{theorem}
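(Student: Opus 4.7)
The strategy falls into three parts: (i) verify orthogonality and the orthonormal basis statements using the Schur relations, (ii) establish the intertwining $\rho|_{\mathcal{E}_{\pi,i}}\simeq \pi$ by direct computation on coefficient functions, and (iii) show that the orthogonal sum exhausts $\bbC^N$ by invoking complete reducibility of the regular representation. Part (i) is immediate from Theorem~\ref{thm:schur-ortho}: for $\pi\not\simeq\sigma$, part~(1) yields $\mathcal{E}_{\pi,i}\perp\mathcal{E}_{\sigma,j}$; for a fixed $\pi$ with $i\neq i'$, the identity $\langle \pi_{j,i},\pi_{k,i'}\rangle=(N/d_\pi)\delta_{j,k}\delta_{i,i'}$ from part~(2) gives $\mathcal{E}_{\pi,i}\perp\mathcal{E}_{\pi,i'}$, while the same identity produces the claimed orthonormal basis $\{\phi^\pi_{j,i}\}_{j=1}^{d_\pi}$ of $\mathcal{E}_{\pi,i}$.

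For part (ii), I would run a direct calculation. Starting from $(\rho(g)f)(h)=f(hg)$ together with $\pi_{j,i}(h)=\langle \pi(h)e_j,e_i\rangle$ and the expansion $\pi(g)e_j=\sum_k \pi_{j,k}(g)\,e_k$, one obtains
\[
(\rho(g)\pi_{j,i})(h)\;=\;\langle \pi(h)\pi(g)e_j,\,e_i\rangle\;=\;\sum_{k=1}^{d_\pi}\pi_{j,k}(g)\,\pi_{k,i}(h).
\]
This confirms $\rho$-invariance of $\mathcal{E}_{\pi,i}$ and exhibits the map $T:\bbC^{d_\pi}\to\mathcal{E}_{\pi,i}$ defined by $T(e_j)=\phi^\pi_{j,i}$ as a unitary (by part (i)) satisfying $\rho(g)T=T\pi(g)$, which is precisely the equivalence $\rho|_{\mathcal{E}_{\pi,i}}\simeq \pi$.

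The remaining inclusion $\bbC^N\subseteq\bigoplus_{\pi,i}\mathcal{E}_{\pi,i}$ is the main obstacle. Applying the complete reducibility result cited in the excerpt (every unitary representation of a finite group decomposes into irreducibles), write $\bbC^N=\bigoplus_k W_k$ as an orthogonal sum of $\rho$-invariant subspaces, each carrying some $\sigma_k\in\widehat{\bbG}$ via a unitary intertwiner $S_k:\bbC^{d_{\sigma_k}}\to W_k$. Setting $v_j^{(k)}:=S_k(e_j)$, I get $\rho(g)v_j^{(k)}=\sum_i (\sigma_k)_{j,i}(g)\,v_i^{(k)}$. The translation identity $f(g)=(\rho(g)f)(e)$, valid on $\ell^2(\bbG)\cong\bbC^N$, then rewrites each basis vector as
\[
v_j^{(k)}\;=\;\sum_{i=1}^{d_{\sigma_k}} v_i^{(k)}(e)\cdot (\sigma_k)_{j,i},
\]
placing $W_k\subseteq\bigoplus_i\mathcal{E}_{\sigma_k,i}$. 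Summing over $k$ delivers the reverse inclusion and hence the decomposition of $\bbC^N$.

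Finally, $\rho\simeq\bigoplus_\pi d_\pi\cdot\pi$ follows immediately: part (ii) gives $d_\pi$ equivalent copies $\mathcal{E}_{\pi,1},\ldots,\mathcal{E}_{\pi,d_\pi}$ of $\pi$ inside $\rho$, and part (iii) ensures no further copies appear. The only subtle point is step (iii): one must recognize each abstract irreducible summand $W_k$ of $\rho$ as an explicit span of coefficient functions of $\sigma_k$. Once the translation identity $v(g)=(\rho(g)v)(e)$ is deployed, this identification is essentially forced, so I expect the proof to be routine once assembled.
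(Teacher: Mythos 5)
Your proof is correct. Note that the paper itself does not prove Theorem~\ref{thm:frob-schur}: it is stated as classical background (the orthogonality relations are cited to Folland, and the decomposition itself is attributed to Frobenius and Schur), so there is no in-paper argument to compare against. Your three steps are the standard textbook route and all check out: the Schur relations give the mutual orthogonality of the spaces ${\cal E}_{\pi,i}$ and the orthonormal bases; the computation $(\rho(g)\pi_{j,i})(h)=\sum_k \pi_{j,k}(g)\pi_{k,i}(h)$ correctly exhibits the unitary intertwiner $T(e_j)=\phi^\pi_{j,i}$ with $T^{-1}\rho(g)T=\pi(g)$, matching the paper's convention $\rho(g)f(h)=f(hg)$; and the completeness step via complete reducibility plus the evaluation identity $v(g)=(\rho(g)v)(e)$ is exactly the right device to force each abstract irreducible summand into $\bigoplus_i{\cal E}_{\sigma_k,i}$. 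The only cosmetic issue is the double use of the symbol $e$ for both the group identity and the standard basis vectors $e_j$; worth disambiguating in a final write-up, but not a mathematical gap.
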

We can reformulate Theorem~\ref{thm:frob-schur} as simultaneous block diagonalization of \red{matrices $\rho(g)$ for all $g\in\bbG$}. Consider a fixed ordering of $\widehat{\bbG}$, e.g.~ $\widehat{\bbG}=\{\pi^1,\pi^2,\ldots,\pi^m\}$. Let $d_i$ denote the dimension of the representation $\pi^i$.
Let $B$ be a matrix of size $N=|\bbG|$ whose columns are the vectors of the orthonormal basis in Theorem \ref{thm:frob-schur} ordered appropriately. Namely, 
\begin{equation}\label{eq:matrix-B}
 B=\Big[
 \underbrace{\phi^{1}_{1,1}|\ldots |\phi^{1}_{d_1,1}}_{\mbox{basis for }{\cal E}_{\pi^1,1}}|
 \underbrace{\phi^{1}_{1,2}|\ldots |\phi^{1}_{d_1,2}}_{\mbox{basis for }{\cal E}_{\pi^1,2}}|
 \ldots
 | \underbrace{\phi^{1}_{1,d_1}|\ldots |\phi^{1}_{d_1,d_1}}_{\mbox{basis for }{\cal E}_{\pi^1,d_1}}|
 \ldots\ldots | \underbrace{\phi^m_{1,d_m}|\ldots |\phi^m_{d_m,d_m}}_{\mbox{basis for }{\cal E}_{\pi^m,d_m}}
 \Big].
\end{equation}
Here, we use $\phi^{k}_{i,j}$ to denote $\phi^{(\pi^k)}_{i,j}$ to ensure clear notation.
By the Schur orthogonality relations (Theorem~\ref{thm:schur-ortho}), $B$ is a unitary matrix, and block diagonalizes the right regular representation. Namely, for an arbitrary element $g\in \bbG$, the matrix $B^{-1} \rho(g) B$ is block diagonalized with $d_i$ blocks of size $d_i\times d_i$ for each irreducible representation $\pi^i\in\widehat{\bbG}$ as follows:
\begin{equation}\label{eq:explicit-diagonalization}
  B^{-1} \rho(g) B= 
\left[{\begin{array}{ccc}
\pi^1(g) & &0 \\
& \ddots& \\
0& & \pi^1(g)
\end{array}}\right]_{d_{1}\times d_{1}} \oplus\ldots\oplus
\left[\begin{array}{ccc}
\pi^m(g) & &0 \\
& \ddots& \\
0& & \pi^m(g)
\end{array}\right]_{d_{m}\times d_{m}}
\end{equation}
The importance of the above block diagonalization lies in the fact that the same unitary matrix $B$ simultaneously block diagonalizes all matrices $\rho(g)$ for every $g\in\bbG$.

We refer the reader to \cite{Serre} for a detailed account of the representation theory of finite groups.  We remark that many of the concepts we discussed in this section can be extended to the context of general compact groups; however, for the purposes of this article, we limit ourselves to finite groups. 

\subsection{Eigen-decompositions of (weighted) Cayley graph adjacency matrices} 
\label{subsec:eigen-decomp of A}
In this section, we review how the Frobenius--Schur Theorem can be used to find an explicit decomposition of adjacency matrices of weighted Cayley graphs.
We summarize this result in Proposition~\ref{prop:eigenvector}. While this result is well known, especially part $(i)$ in the case of unweighted Cayley graphs, it is challenging to find references that give detailed descriptions of the eigenbases mentioned in the proposition. In particular, we could not find references for proofs of Proposition \ref{prop:eigenvector} $(ii)$ and $(iii)$ in the case of a weighted Cayley graph that is not necessarily quasi-Abelian. To streamline the paper, we have included the proof of Proposition~\ref{prop:eigenvector} in  \ref{Appendix:proofs}.

We now provide the necessary definitions, \red{particularly of (weighted) Cayley graphs. Although Cayley graphs can be defined within the context of directed graphs, this paper focuses exclusively on undirected graphs.

Fix a finite group $\bbG$, and let $S$ be a subset of $\bbG$ that is closed under taking inverses, i.e., $s\in S$ precisely when $s^{-1}\in S$. We say $G$ is a \emph{Cayley graph} on the group $\bbG$ with \emph{generating set} $S$ if it has the vertex set $V(G)=\bbG$, and an edge joining vertex $g$ and $h$ whenever $g^{-1}h\in S$. }
We say $w:\bbG\rightarrow [0,\infty)$ is a \emph{weight function} if for each $x\in \bbG$, $w(x)=w(x^{-1})$. This leads to the following definition.

\begin{definition}\label{def-cayley}
    A \emph{weighted Cayley graph} $G$ on a group $\bbG$ with weight function $w:\bbG\rightarrow [0,\infty)$ has vertex set $V(G)=\bbG$ and edge weights determined by the weight function, where each edge $g\sim h$ has weight $w(g^{-1}h)$. If an edge has weight zero, we consider it as a non-edge in the weighted Cayley graph. We denote $G$ by ${\rm Cayley}(\bbG, w)$.
\end{definition}
 
Note that any Cayley graph is just a special case of a weighted Cayley graph where the weight function is ${\bf 1}_S$, the characteristic function of the generating set $S$ defined to be $${\bf 1}_S(x)=\begin{cases} 1 \text{ if } x\in S\\ 0 \text{ otherwise}\end{cases}.$$ Clearly, the adjacency matrix $A$ for a weighted Cayley graph ${\rm Cayley}(\bbG,w)$ as above has $(g,h)$-th entry 
$$A_{g,h}= w(g^{-1}h).$$ 

The following proposition relates the eigenvalues of the adjacency matrices of a weighted Cayley graph to eigenvalues of the corresponding irreducible matrix representations of the underlying group. 
To state our proposition, we need the following definition.
\begin{definition}\label{def:pi(S)}
Consider a function $f:\bbG\rightarrow \bbC$ where $\bbG$ is a finite group. For every representation $\pi\in\widehat{\bbG}$, define 
\begin{equation*}
    \pi(f)=\sum_{x\in \bbG}f(x)\pi(x).
\end{equation*}
\end{definition}
\begin{proposition}\label{prop:eigenvector}
Let $G$ be a weighted Cayley graph defined on a finite group $\bbG$ with weight function $w:\bbG\rightarrow [0,\infty)$.
\begin{itemize}
\item[(i)] The set of eigenvalues of the adjacency matrix $A_G$ coincides with the set $\bigcup_{\pi\in\widehat{\bbG}}{\rm spec}(\pi(w))$, where ${\rm spec}(\pi(w))$ denotes the spectrum of the matrix $\pi(w)$.
\item[(ii)] Let $B$ be the unitary matrix of the normalized coefficient functions given in \eqref{eq:matrix-B}. Every $\lambda$-eigenvector $\phi\in \bbC^{|\bbG|}$ of $A_G$ can be described as
$$\phi=B(\oplus_{\pi\in \widehat{\bbG}}\oplus_{i=1}^{d_\pi} X_{\pi,i}),$$
where every $X_{\pi,i}\in \bbC^{d_\pi}$ is either 0 or a $\lambda$-eigenvector for $\pi(w)$. Moreover, at least one of the vectors $X_{\pi,i}$ is nonzero. 
\item[(iii)] For each $\pi$, let ${\mathcal Q}_{\pi(w)}$ denote a fixed orthonormal eigenbasis for $\pi(w)$.
Then 
$$\bigcup_{\pi\in\widehat{\bbG}}\bigcup_{i:1,\ldots,d_\pi}\left\{\sqrt{\frac{d_\pi}{|\bbG|}}\sum_{k=1}^{d_\pi}x_k\pi_{k,i}: \ 
\left[
\begin{array}{c}
x_1 \\
\vdots \\
x_{d_\pi}  
\end{array}
\right]
\in {\mathcal Q}_{\pi(w)} \right\}$$
is an orthonormal eigenbasis for $A_G$. 
\end{itemize}
\end{proposition}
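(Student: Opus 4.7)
The plan is to recognize the adjacency matrix as a weighted sum of right regular representation matrices and then apply the Frobenius--Schur block diagonalization (Theorem~\ref{thm:frob-schur} and equation~\eqref{eq:explicit-diagonalization}) termwise.

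First I would verify by direct entry computation that $A_G = \sum_{x \in \bbG} w(x)\, \rho(x)$; that is, $A_G = \rho(w)$ in the sense of Definition~\ref{def:pi(S)}. Since $\rho(x)_{g,h} = \mathds{1}[x = g^{-1}h]$, the right-hand side has $(g,h)$-entry $w(g^{-1}h)$, which matches the definition of $A_G$. Linearity then allows me to extend~\eqref{eq:explicit-diagonalization} to
\begin{equation*}
 B^{-1} A_G B \;=\; \sum_{x\in\bbG} w(x)\, B^{-1}\rho(x) B \;=\; \bigoplus_{\pi\in\widehat{\bbG}} \underbrace{\bigl(\pi(w)\oplus\cdots\oplus\pi(w)\bigr)}_{d_\pi\ \text{copies}},
\end{equation*}
where the $d_\pi$ identical copies of $\pi(w)$ correspond to the subspaces $\mathcal{E}_{\pi,1},\ldots,\mathcal{E}_{\pi,d_\pi}$.

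From this block-diagonal form, part (i) is immediate: the spectrum of $A_G$ equals the spectrum of the block-diagonal matrix on the right, which in turn equals $\bigcup_{\pi\in\widehat{\bbG}}\mathrm{spec}(\pi(w))$ (duplicating blocks does not introduce new eigenvalues). For part (ii), I would observe that $\phi$ is a $\lambda$-eigenvector of $A_G$ if and only if $B^{-1}\phi$ is a $\lambda$-eigenvector of the block-diagonal matrix. Writing $B^{-1}\phi = \bigoplus_{\pi,i} X_{\pi,i}$ with $X_{\pi,i}\in\bbC^{d_\pi}$, the eigenvalue equation decouples block-by-block, forcing each $X_{\pi,i}$ to satisfy $\pi(w) X_{\pi,i} = \lambda X_{\pi,i}$. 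Hence each $X_{\pi,i}$ is either $0$ or a $\lambda$-eigenvector of $\pi(w)$, and at least one is nonzero since $\phi\neq 0$.

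For part (iii), I would unpack what $B\cdot v$ looks like when $v$ is supported only on the $(\pi,i)$-block with value $X_{\pi,i} = [x_1,\ldots,x_{d_\pi}]^{\t}$. By the column structure of $B$ displayed in~\eqref{eq:matrix-B}, this product equals the linear combination $\sum_{k=1}^{d_\pi} x_k\, \phi^{\pi}_{k,i} = \sqrt{d_\pi/|\bbG|}\sum_k x_k\, \pi_{k,i}$. Running over a fixed orthonormal eigenbasis $\mathcal{Q}_{\pi(w)}$ of each $\pi(w)$ and over all $(\pi,i)$ produces an orthonormal eigenbasis of $A_G$: orthonormality is inherited from orthonormality of the columns of $B$ (Schur orthogonality) together with orthonormality within each $\mathcal{Q}_{\pi(w)}$, and completeness follows because we produce $\sum_\pi d_\pi \cdot d_\pi = |\bbG|$ vectors in total.

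The main obstacle I anticipate is bookkeeping rather than ideas: one must carefully match the ordering of columns of $B$ in~\eqref{eq:matrix-B} to the ordering of blocks in the block-diagonal form of $B^{-1}A_GB$, so that the $X_{\pi,i}$-coordinates of $B^{-1}\phi$ truly index into the $\mathcal{E}_{\pi,i}$-columns of $B$. Once the indexing is aligned, parts (i)--(iii) are essentially three readings of the same block decomposition.
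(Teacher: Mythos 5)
Your proposal is correct and follows essentially the same route as the paper's proof in \ref{Appendix:proofs}: identify $A_G=\sum_{x\in\bbG}w(x)\rho(x)$, conjugate by $B$ to obtain $B^{-1}A_GB=\bigoplus_{\pi\in\widehat{\bbG}}d_\pi\cdot\pi(w)$, and read off (i)--(iii) from the blockwise decoupling, with orthonormality and completeness in (iii) coming from unitarity of $B$ and the count $\sum_\pi d_\pi^2=|\bbG|$. The only detail the paper makes explicit that you leave implicit is that $\pi(w)$ is Hermitian (using $w(x)=w(x^{-1})$ and unitarity of $\pi$), which is what guarantees the orthonormal eigenbasis ${\mathcal Q}_{\pi(w)}$ exists in the first place.
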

\red{The proof of Proposition~\ref{prop:eigenvector} can be found in  \ref{Appendix:proofs}.}
In the following corollary, we consider the special case for \emph{quasi-Abelian} Cayley graphs. A weighted Cayley graph is \emph{quasi-Abelian} if the weight function is a class function. A class function in $\ell^2(\bbG)$ (or the associated class vector in $\bbC^N$) is a function (or vector) that is constant on conjugacy classes of $\bbG$.
In this case, Proposition \ref{prop:eigenvector} takes a much simpler form. Namely, the collection of all (normalized) coefficient functions form an (orthonormal) eigenbasis for $A_G$.  Even though this result is known (see e.g.~\cite[Theorem 1.1]{Rockmore} or \cite[Theorem III.1]{MDK:2019:Sampta} for a proof), we provide a proof which is a simple application of the previous proposition. 
\begin{corollary}\label{cor:constant-on-conjugacy}
Consider a quasi-Abelian weighted Cayley graph $G$ on a finite group $\bbG$ of size $N$ with weight function $w:\bbG\rightarrow [0,\infty)$.
The set $\bigcup_{\pi\in \widehat{\bbG}}\left\{\sqrt{\frac{d_\pi}{N}}\pi_{i,j}:\ 1\leq i,j\leq d_\pi\right\}$ forms an orthonormal basis of eigenvectors for $A_G$. Namely, for every $\pi\in \widehat{\bbG}$ and $1\leq i,j\leq d_\pi$,
\[ A_G\pi_{i,j} = \lambda_\pi \pi_{i,j},\]
where $\lambda_\pi = \frac{1}{d_\pi}{\rm Tr}(\pi(w))$, \red{and $\pi(w)$ is as defined in Definition~\ref{def:pi(S)}}.
\end{corollary}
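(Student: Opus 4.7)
The plan is to apply Proposition~\ref{prop:eigenvector} after showing that, when $w$ is a class function, each matrix $\pi(w)$ collapses to a scalar multiple of the identity. Once this is established, part (iii) of the proposition immediately yields the stated orthonormal eigenbasis, and computing the trace identifies the eigenvalue.

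First I would verify that $\pi(w)$ commutes with $\pi(g)$ for every $g\in\bbG$. Using the substitution $y=gxg^{-1}$ together with the class-function property $w(g^{-1}yg)=w(y)$, one computes
\begin{equation*}
\pi(g)\,\pi(w)\,\pi(g)^{-1}=\sum_{x\in\bbG}w(x)\,\pi(gxg^{-1})=\sum_{y\in\bbG}w(g^{-1}yg)\,\pi(y)=\pi(w).
\end{equation*}
Hence $\pi(w)$ belongs to the commutant of $\pi(\bbG)$. Since $\pi$ is irreducible, Schur's lemma forces $\pi(w)=\lambda_\pi I_{d_\pi}$ for some scalar $\lambda_\pi\in\bbC$. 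Taking traces yields $\lambda_\pi=\tfrac{1}{d_\pi}\Tr(\pi(w))$, as claimed.

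Next I would invoke Proposition~\ref{prop:eigenvector}(iii). Because $\pi(w)=\lambda_\pi I_{d_\pi}$, \emph{every} orthonormal basis of $\bbC^{d_\pi}$ is an orthonormal eigenbasis for $\pi(w)$; in particular, we may choose $\mathcal{Q}_{\pi(w)}$ to be the standard basis $\{e_j\}_{j=1}^{d_\pi}$. Substituting the coordinate vector $e_j$ (so $x_k=\delta_{k,j}$) into the formula from Proposition~\ref{prop:eigenvector}(iii) produces the vector
\begin{equation*}
\sqrt{\frac{d_\pi}{N}}\sum_{k=1}^{d_\pi}\delta_{k,j}\,\pi_{k,i}=\sqrt{\frac{d_\pi}{N}}\,\pi_{j,i},
\end{equation*}
which is therefore an eigenvector of $A_G$ with eigenvalue $\lambda_\pi$. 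Ranging over $\pi\in\widehat{\bbG}$ and $1\le i,j\le d_\pi$, and invoking the Schur orthogonality relations (Theorem~\ref{thm:schur-ortho}) to see that these vectors are orthonormal and exhaust $\bbC^N$, gives the orthonormal eigenbasis asserted in the corollary.

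The only substantive step is the commutation argument producing $\pi(w)=\lambda_\pi I_{d_\pi}$; there is no real obstacle here, as this is a textbook application of Schur's lemma once the class-function hypothesis is translated into the commutation identity. The remainder is purely a specialization of Proposition~\ref{prop:eigenvector}(iii).
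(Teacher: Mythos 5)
Your proof is correct, and the overall architecture matches the paper's: both arguments reduce the corollary to showing that $\pi(w)$ is a scalar multiple of the identity and then specialize Proposition~\ref{prop:eigenvector}(iii) to the standard basis. Where you differ is in how that scalar-multiple fact is established. You translate the class-function hypothesis into the commutation identity $\pi(g)\pi(w)\pi(g)^{-1}=\pi(w)$ and invoke Schur's lemma on the commutant of the irreducible representation $\pi$; the paper instead expands $w$ in the orthonormal basis of normalized characters $\chi_\pi$ and computes the matrix entries $\langle \pi(w)e_i,e_j\rangle = \overline{\langle w,\pi_{i,j}\rangle}$ directly via the Schur orthogonality relations, which shows the off-diagonal entries vanish and the diagonal entries all equal $\frac{\sqrt{N}}{d_\pi}\langle \chi_\pi,w\rangle$. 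Your route is the more standard textbook argument and is arguably cleaner; the paper's route has the side benefit of expressing the eigenvalue explicitly as a Fourier coefficient of $w$ against the character $\chi_\pi$ before converting it to the trace formula, which ties the spectrum to the character expansion of the weight function. Your closing appeal to the Schur orthogonality relations for orthonormality and completeness is redundant, since Proposition~\ref{prop:eigenvector}(iii) already asserts that the resulting set is an orthonormal eigenbasis, but this is harmless.
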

\begin{proof}
The set of normalized characters $\left\{\chi_\pi:=\frac{1}{\sqrt{N}}\sum_{i=1}^{d_{\pi}}\pi_{i,i}:\ \pi\in \widehat{\bbG}\right\}$ of a group $\bbG$ forms an orthonormal basis for the subspace of class functions in $\ell^2(\bbG)$ (see e.g.~\cite[Proposition 5.23]{1995:Folland:HarmonicAnalysis}).
Given that $G$ is quasi-Abelian, 
the weight function $w$ is a class function, so
\begin{equation*}\label{eq:expansion1}
    w=\sum_{\pi\in\widehat{\bbG}}\langle w, \chi_\pi\rangle_{\bbC^N}\chi_\pi
    =\frac{1}{\sqrt{N}}\sum_{\pi\in\widehat{\bbG}}\sum_{i=1}^{d_\pi}\langle w, \chi_\pi\rangle_{\bbC^N}\pi_{i,i}.
\end{equation*}
Next, we show that $\pi(w)$ is a multiple of the identity matrix. Indeed, using the above expansion of $w$ with respect to the orthonormal basis $\left\{\sqrt{\frac{d_{\pi}}{N} }\pi_{i,j}:\pi\in\widehat{\bbG}, 1\leq i,j\leq d_{\pi} \right\}$, we have that for every $\pi\in\widehat{\bbG}$,
\begin{eqnarray*}
\langle \pi(w)e_i,e_j\rangle = \sum_{x\in \bbG}\langle w(x)\pi(x)e_i,e_j\rangle= \sum_{x\in \bbG}w(x)\pi_{i,j}(x)= 
\overline{\langle w, \pi_{i,j}\rangle_{\bbC^N}}
=\left\{\begin{array}{cc}
  0   & i\neq j \\
  \frac{\sqrt{N}}{d_\pi}\langle\chi_\pi, w\rangle_{{\bbC^N}}   & i=j
\end{array}\right..
\end{eqnarray*}
So, the standard basis $\{e_i\}_{i=1}^{d_{\pi}}$ is an orthonormal eigenbasis of $\pi(w)$ associated with the eigenvalue 
$\frac{\sqrt{N}}{d_\pi}\langle\chi_\pi, w\rangle_{{\bbC^N}}$. By part $(iii)$ of Proposition \ref{prop:eigenvector}, the set $\bigcup_{\pi\in\widehat{\bbG}}\{\pi_{i,j}: 1\leq i,j\leq d_\pi\}$ forms an orthogonal eigenbasis for $A_{G}$ associated with (repeated) eigenvalues $\frac{\sqrt{N}}{d_\pi}\langle\chi_\pi, w\rangle_{{\bbC^N}}$. Finally, since for every $i$ we have
$\frac{\sqrt{N}}{d_\pi}\langle\chi_\pi, w\rangle_{{\bbC^N}}=\langle\pi(w) e_{i}, e_i\rangle_{{\bbC^N}},$
we get $\frac{\sqrt{N}}{d_\pi}\langle\chi_\pi, w\rangle_{{\bbC^N}}=\frac{1}{d_\pi}{\rm Tr}(\pi(w))$. 
\end{proof}

Next, we observe the inherent difference between the spectral behaviour of general weighted Cayley graphs and quasi-Abelian ones. As shown in Corollary~\ref{cor:constant-on-conjugacy}, the canonical eigenbasis for the adjacency matrix of a quasi-Abelian weighted Cayley graph does not depend on the particular choice of the weight function $w$. The choice of the weight function only affects the eigenvalues. In the case of a general weighted Cayley graph (as in Proposition~\ref{prop:eigenvector}), the weight function determines both eigenvalues and eigenvectors of the associated Cayley graph. As a result, the spectral features of non-quasi-Abelian weighted Cayley graphs are more intricate.

We conclude this section with the illustration of Proposition~\ref{prop:eigenvector} through two examples.
Due to applications in analysis of ranked data sets, developing signal processing on the symmetric group is of considerable interest. 
In \cite{paper}, the authors contend that the permutahedron is an appropriate Cayley graph on $\bbS_n$ for representing ranked data.
In this graph, two rankings are  adjacent if and only if they differ by transposing two adjacent candidates.
More precisely, 
the \emph{permutahedron} ${\mathbb P}_n$ is the Cayley graph on the group of $\bbS_n$ with generating set $S_n=\{(i,i+1):\ 1\leq i\leq n-1\}$ consisting of all consecutive transpositions.

\begin{example}[Fourier basis for signal processing on ${\mathbb P}_3$]\label{example:P3}
Using the notation from \ref{Appendix:rep-S3} for the irreducible representations of $\bbS_3$, define
$B=\begin{bmatrix} \frac{1}{\sqrt{6}}\iota_{1,1} | \frac{1}{\sqrt{6}}\tau_{1,1} | \frac{1}{\sqrt{3}}\pi_{1,1}|\frac{1}{\sqrt{3}}\pi_{2,1}|\frac{1}{\sqrt{3}}\pi_{1,2}|\frac{1}{\sqrt{3}}\pi_{2,2}\end{bmatrix}$, \red{where $\iota$, $\tau$, and $\pi$ are the trivial representation, the alternating representation, and the standard representation of $\bbS_3$, respectively}. 
\begin{figure}[h]
  \centering
   \subfigure{
\begin{tikzpicture}[scale=0.4, baseline={(0,0)}]
\begin{scope}[every node/.style={circle,thick,draw}]
    \node (A) at (0,4) [label=above:(13)]{}; 
    \node (B) at (3.464,2) [label=right:(132)]{}; 
    \node (C) at (3.464,-2) [label=right:(23)]{}; 
    \node (D) at (0,-4) [label=below:id]{}; 
    \node (E) at (-3.464,-2) [label=left:(12)]{}; 
    \node (F) at (-3.464,2) [label=left:(123)]{}; 
\end{scope}

\begin{scope}[>={Stealth[black]},
              every node/.style={fill=white,circle},
              every edge/.style={draw=red,very thick}]
    \draw[thick, purple] (D)--(C);
    \draw[thick, teal] (D)--(E);
    \draw[thick, purple] (E)--(F);
    \draw[thick, teal] (C)--(B);
    \draw[thick, purple] (B)--(A);
    \draw[thick, teal] (F)--(A);
\end{scope}
\end{tikzpicture}
  }
 \qquad
  \subfigure{
    \begin{tabular}{c}
      $A_3=\begin{bmatrix}
      0& 1& 1& 0& 0& 0\\
1& 0& 0& 0& 1& 0\\ 
1& 0& 0& 0& 0& 1\\ 
0& 0& 0& 0& 1& 1\\ 
0& 1& 0& 1& 0& 0\\ 
0& 0& 1& 1& 0& 0
      \end{bmatrix}$
    \end{tabular}
  }
  \caption{\red{The Cayley graph ${\mathbb P}_3$ and its adjacency matrix $A_3$.}} 
  \label{P3}
\end{figure}

The unitary matrix $B$ block-diagonalizes matrices $\rho(g)$ for all $g\in \bbS_3$. So, for the adjacency matrix $A_3$ of ${\mathbb P}_3$ we have
$$B^{\t} A_3B={\rm diag}\left(2,-2,\begin{bmatrix} 
\frac{1}{2} & \frac{\sqrt{3}}{2} \\
\frac{\sqrt{3}}{2} & -\frac{1}{2}
\end{bmatrix},\begin{bmatrix} 
\frac{1}{2} & \frac{\sqrt{3}}{2} \\
\frac{\sqrt{3}}{2} & -\frac{1}{2}
\end{bmatrix}\right).$$
The eigenvalues of $B^{\t}A_3B$ are $2,-2\ (\mbox{of multiplicity 1}), 1, -1\ (\mbox{of multiplicity 2})$ associated with eigenvectors
$[1,0,0,0,0,0]^{\t},[0,1,0,0,0,0]^{\t},[0,0,\sqrt{3}/2,1/2,0,0]^{\t},$ $[0,0,0,0,\sqrt{3}/2,1/2]^{\t},
[0,0,-1/2,\sqrt{3}/2,0,0]^{\t}$, $[0,0,0,0,-1/2,\sqrt{3}/2]^{\t}$.
This leads to the following \red{eigenbasis for $A_3$, derived from the irreducible representations of $\bbS_3$:}
$$
{\mathcal B}=\left\{\begin{bmatrix}
\frac{1}{\sqrt{6}}\\
\frac{1}{\sqrt{6}}\\
\frac{1}{\sqrt{6}}\\
\frac{1}{\sqrt{6}}\\
\frac{1}{\sqrt{6}}\\
\frac{1}{\sqrt{6}}
\end{bmatrix},
\begin{bmatrix}
\frac{1}{\sqrt{6}}\\
-\frac{1}{\sqrt{6}}\\
-\frac{1}{\sqrt{6}}\\
-\frac{1}{\sqrt{6}}\\
\frac{1}{\sqrt{6}}\\
\frac{1}{\sqrt{6}}
\end{bmatrix},
\begin{bmatrix}
\frac12\\
0\\
\frac12\\
-\frac12\\
-\frac12\\
0
\end{bmatrix},
\begin{bmatrix}
\frac{1}{2\sqrt{3}}\\
\frac{1}{\sqrt{3}}\\
-\frac{1}{2\sqrt{3}}\\
-\frac{1}{2\sqrt{3}}\\
\frac{1}{2\sqrt{3}}\\
-\frac{1}{\sqrt{3}}
\end{bmatrix},
\begin{bmatrix}
-\frac{1}{2\sqrt{3}}\\
\frac{1}{\sqrt{3}}\\
-\frac{1}{2\sqrt{3}}\\
-\frac{1}{2\sqrt{3}}\\
-\frac{1}{2\sqrt{3}}\\
\frac{1}{\sqrt{3}}
\end{bmatrix},
\begin{bmatrix}
\frac{1}{2}\\
0\\
-\frac{1}{2}\\
\frac{1}{2}\\
-\frac{1}{2}\\
0
\end{bmatrix}\right\}.
$$
\end{example}

\begin{remark}\label{remark:compare-with-cycle}
    \red{The permutahedron $\mathbb P_3$ is pictured in Figure \ref{P3}, and is isomorphic to the 6-cycle. The Cayley graph of $\bbZ_6$ with generating set $\{1,-1\}$ is isomorphic to the 6-cycle as well. We note that the underlying groups of these two Cayley graphs are very different; in fact, one group is Abelian, and the other is not. 
    While these graphs are isomorphic, taking each underlying group and its representation theory into account results in producing different Fourier bases.} 
    For the Cayley graph of $\bbZ_6$ with generating set $\{1,-1\}$, the Fourier basis consists of the normalized group characters for $\bbZ_6$,
          $$
    \left\{\begin{bmatrix}
        \frac{1}{\sqrt{6}} \\ \frac{1}{\sqrt{6}} \\ \frac{1}{\sqrt{6}} \\ \frac{1}{\sqrt{6}} \\ \frac{1}{\sqrt{6}} \\ \frac{1}{\sqrt{6}}
    \end{bmatrix},
    \begin{bmatrix}
        \frac{1}{\sqrt{6}} \\ \frac{1}{\sqrt{6}}\omega \\ \frac{1}{\sqrt{6}}\omega^2 \\ \frac{1}{\sqrt{6}}\omega^3 \\ \frac{1}{\sqrt{6}}\omega^4 \\ \frac{1}{\sqrt{6}}\omega^5
    \end{bmatrix},
    \begin{bmatrix}
        \frac{1}{\sqrt{6}} \\ \frac{1}{\sqrt{6}}\omega^2 \\ \frac{1}{\sqrt{6}}\omega^4 \\ \frac{1}{\sqrt{6}} \\ \frac{1}{\sqrt{6}}\omega^2 \\ \frac{1}{\sqrt{6}}\omega^4
    \end{bmatrix},
    \begin{bmatrix}
        \frac{1}{\sqrt{6}} \\ \frac{1}{\sqrt{6}}\omega^3 \\ \frac{1}{\sqrt{6}} \\ \frac{1}{\sqrt{6}}\omega^3 \\ \frac{1}{\sqrt{6}} \\ \frac{1}{\sqrt{6}}\omega^3
    \end{bmatrix},
    \begin{bmatrix}
        \frac{1}{\sqrt{6}} \\ \frac{1}{\sqrt{6}}\omega^4 \\ \frac{1}{\sqrt{6}}\omega^2 \\ \frac{1}{\sqrt{6}} \\ \frac{1}{\sqrt{6}}\omega^4 \\ \frac{1}{\sqrt{6}}\omega^2
    \end{bmatrix},
    \begin{bmatrix}
        \frac{1}{\sqrt{6}} \\ \frac{1}{\sqrt{6}}\omega^5 \\ \frac{1}{\sqrt{6}}\omega^4 \\ \frac{1}{\sqrt{6}}\omega^3 \\ \frac{1}{\sqrt{6}}\omega^2 \\ \frac{1}{\sqrt{6}}\omega
    \end{bmatrix}\right\},
    $$
     where $\omega=\exp(2\pi i/6)$ is the primitive sixth root of unity. \red{Similar graph Fourier bases, where the underlying group is taken to be $\bbZ_n$, have been used in \cite{GSP-circulant2019} for developing GSP methods applicable to circulant networks.}

\red{On the other hand, the Fourier basis for $\mathbb P_3$ is $\mathcal B$, as described in Example~\ref{example:P3}. A Fourier basis similar to ${\mathcal B}$, where the underlying group is taken to be the symmetric group, was used in \cite{paper} to develop GSP methods for analyzing ranked data sets.
The symmetric group $\mathbb{S}_3$, which consists of all permutations of three objects, and its Cayley graphs are particularly well-suited for developing GSP when the data consists of rankings involving three items (see \cite{paper} for applications of GSP based on ${\mathbb P}_n$ to analysis of ranked data).}
\end{remark}

\begin{example}[Fourier basis for signal processing on ${\mathbb P}_4$]\label{example:P4}
Using the notation for the irreducible representations of $\bbS_4$ given in \ref{Appendix:rep-S4}, we define the matrix 
$$B=\left[\begin{smallmatrix} \widetilde{\iota_{1,1}} | \widetilde{\tau_{1,1}} | \widetilde{\sigma_{1,1}}|\widetilde{\sigma_{2,1}}|\widetilde{\sigma_{1,2}}|\widetilde{\sigma_{2,2}}|  \widetilde{\pi_{1,1}}|\widetilde{\pi_{2,1}}|\widetilde{\pi_{3,1}}|\widetilde{\pi_{1,2}}|\widetilde{\pi_{2,2}}|\widetilde{\pi_{3,2}}|\widetilde{\pi_{1,3}}|\widetilde{\pi_{2,3}}|\widetilde{\pi_{3,3}}| \widetilde{\pi'_{1,1}}|\widetilde{\pi'_{2,1}}|\widetilde{\pi'_{3,1}}|\widetilde{\pi'_{1,2}}|\widetilde{\pi'_{2,2}}|\widetilde{\pi'_{3,2}}|\widetilde{\pi'_{1,3}}|\widetilde{\pi'_{2,3}}|\widetilde{\pi'_{3,3}}\end{smallmatrix}\right],$$
where $\widetilde{\upsilon_{i,j}}$ denotes the normalization of the coefficient function $\upsilon_{i,j}$ \red{into a unit vector} for every representation $\upsilon$.
\red{Here, $\iota$, $\tau$, $\sigma$, $\pi'$, and $\pi$ are the trivial representation, the alternating representation, the 2-dimensional representation, the standard representation, and the standard-alternating tensor representation of $\bbS_4$, respectively.}
Note that $B$ is a unitary matrix that block diagonalizes the right regular representation of $\bbS_4$.
%
\begin{figure}[h]\label{fig:P_4}   
\centering
\begin{tikzpicture}
    \begin{scope}[scale=0.8, every node/.style={circle,thick,draw}]
        \node (id) at (0,0) [label=below:id]{};
       
        \node (2134) at (-6,1) [label=below:(12)]{};
        \node (1324) at (0,1) [label=left:(23)]{};
        \node (1243) at (6,1) [label=below:(34)]{};

        \node (2314) at (-7,2) [label=left:(123)]{};
        \node (3124) at (-3,2) [label=180:(132)]{};
        \node (2143) at (0,2) [label=2:(12)(34)]{};
        \node (1342) at (3,2) [label=right:(234)]{};
        \node (1423) at (7,2) [label=right:(243)]{};

        \node (3214) at (-8,3) [label=left:(13)]{};
        \node (2341) at (-5,3) [label=right:(1234)]{};
        \node (2413) at (-2,3) [label=right:(1243)]{};
        \node (3142) at (2,3) [label=right:(1342)]{};
        \node (1432) at (5,3) [label=right:(24)]{};
        \node (4123) at (8,3) [label=right:(1432)]{};

        \node (3241) at (-7,4) [label=left:(134)]{};
        \node (2431) at (-3,4) [label=left:(124)]{};
        \node (3412) at (0,4) [label=2:(13)(24)]{};
        \node (4213) at (3,4) [label=0:(143)]{};
        \node (4132) at (7,4) [label=right:(142)]{};

        \node (3421) at (-6,5) [label=above:(1324)]{};
        \node (4231) at (0,5) [label=left:(14)]{};
        \node (4312) at (6,5) [label=above:(1423)]{};

        \node (4321) at (0,6) [label=above:(14)(23)]{};
    \end{scope}

    \begin{scope}[>={Stealth[black]},
              every node/.style={fill=white,circle},
              every edge/.style={draw=red,very thick}]
        \draw[thick,teal] (id)--(2134);
        \draw[thick,purple] (id)--(1324);
        \draw[thick,orange] (id)--(1243);

        \draw[thick,purple] (2134)--(2314);
        \draw[thick,orange] (2134)--(2143);
        \draw[thick,teal] (1324)--(3124);
        \draw[thick,orange] (1324)--(1342);
        \draw[thick,teal] (1243)--(2143);
        \draw[thick,purple] (1243)--(1423);

        \draw[thick,teal] (2314)--(3214);
        \draw[thick,orange] (2314)--(2341);
        \draw[thick,purple] (2143)--(2413);
        \draw[thick,purple] (3124)--(3214);
        \draw[thick,orange] (3124)--(3142);
        \draw[thick,teal] (1342)--(3142);
        \draw[thick,purple] (1342)--(1432);
        \draw[thick,orange] (1423)--(1432);
        \draw[thick,teal] (1423)--(4123);

        \draw[thick,orange] (3214)--(3241);
        \draw[thick,teal] (2341)--(3241);
        \draw[thick,purple] (2341)--(2431);
        \draw[thick,orange] (2413)--(2431);
        \draw[thick,teal] (2413)--(4213);
        \draw[thick,purple] (3142)--(3412);
        \draw[thick,teal] (1432)--(4132);
        \draw[thick,purple] (4123)--(4213);
        \draw[thick,orange] (4123)--(4132);

        \draw[thick,purple] (3241)--(3421);
        \draw[thick,teal] (2431)--(4231);
        \draw[thick,orange] (4213)--(4231);
        \draw[thick,orange] (3412)--(3421);
        \draw[thick,teal] (3412)--(4312);
        \draw[thick,purple] (4132)--(4312);

        \draw[thick,teal] (3421)--(4321);
        \draw[thick,purple] (4231)--(4321);
        \draw[thick,orange] (4312)--(4321);
    \end{scope}
    \end{tikzpicture}
\captionof{figure}{The permutahedron $\mathbb P_4$ is the Cayley graph on the group $\bbS_4$ with generating set $S=\{(12), (23), (34)\}$.}
\end{figure}

So, for the adjacency matrix $A_4$ of ${\mathbb P}_4$ we have
$$B^{-1} A_4B={\rm diag}\left(3,\ -3,\ 
\bigoplus_{i=1}^2\begin{bmatrix} 
0 & 2+\omega^3 \\
2+\omega &0
\end{bmatrix},\ 
\bigoplus_{i=1}^3\begin{bmatrix} 
-2 & 0 & -1 \\
0 & -1 & 0 \\
-1 & 0 & 0
\end{bmatrix},\ 
\bigoplus_{i=1}^3\begin{bmatrix} 
2 & 0 & 1 \\
0 & 1 & 0 \\
1 & 0 & 0
\end{bmatrix}\right).$$
Next we apply Proposition~\ref{prop:eigenvector} $(ii)$ to the eigenvectors of the blocks in $B^{-1}A_4B$
to compute eigenvectors of $A_4$ as follows. 
\begin{align*}
{\cal B}'=\left\{\begin{matrix}
    \widetilde{\iota_{1,1}}, \widetilde{\tau_{1,1}},\frac{\sqrt{6}-\sqrt{2}i}{4}\widetilde{\sigma_{1,l}}+\frac{\sqrt{2}}{2}\widetilde{\sigma_{2,l}}, \frac{-\sqrt{6}+\sqrt{2}i}{4}\widetilde{\sigma_{1,l}}+\frac{\sqrt{2}}{2}\widetilde{\sigma_{2,l}},& &  \\
    \frac{1-\sqrt{2}}{\sqrt{4-2\sqrt{2}}}\widetilde{\pi_{1,k}}+\frac{1}{\sqrt{4-2\sqrt{2}}}\widetilde{\pi_{3,k}}, 
    \widetilde{\pi_{2,k}}, 
    \frac{\sqrt{2}+1}{\sqrt{4+2\sqrt{2}}}\widetilde{\pi_{1,k}}+\frac{1}{\sqrt{4+2\sqrt{2}}}\widetilde{\pi_{3,k}}, &:& k=1,2,3, \ l=1,2\\
    \frac{\sqrt{2}+1}{\sqrt{4+2\sqrt{2}}}\widetilde{\pi_{1,k}'}+\frac{1}{\sqrt{4+2\sqrt{2}}}\widetilde{\pi_{3,k}'}, 
    \widetilde{\pi_{2,k}'}, 
    \frac{1-\sqrt{2}}{\sqrt{4-2\sqrt{2}}}\widetilde{\pi_{1,k}'}+\frac{1}{\sqrt{4-2\sqrt{2}}}\widetilde{\pi_{3,k}'} & &  
\end{matrix}\right\}.
\end{align*}

\end{example}

\section{Construction of frames for graph signals}\label{sec:frames}
Let $\bbG$ be a finite (not necessarily Abelian) group, and consider the associated inner product space $\ell^2(\bbG)$ together with its usual norm $\|f\|_2=\sqrt{\sum_{x\in \bbG} |f(x)|^2}$. 
 A \emph{frame} for $\ell^2(\bbG)$ is a finite set of vectors ${\mathcal F}=\{\psi_i: i\in I\}$ such that for some positive real numbers $A$ and $B$, we have 
\begin{equation}\label{Eframedefn}
  A\|f\|_{2}^2
    \leq \sum_{i\in I} |\langle f, \psi_i \rangle |^2
    \leq B\|f\|_{2}^2, \mbox{ for every }f\in \ell^2(\bbG).%
\end{equation}%
\red{Elements $\psi_i$ of a frame ${\mathcal F}$ are called \emph{frame atoms}.}
The constants $A$ and $B$ are called the \emph{lower} and \emph{upper frame bounds} respectively. 
 Frames provide stable and possibly redundant systems which allow reconstruction of a signal $f$ from its frame coefficients $\{\langle f, \psi_i \rangle\}_{i\in I}$. In the case of redundant frames, reconstruction of a signal might still be possible even if some portion of its frame coefficients is lost or corrupted.

Since $\ell^2(\bbG)$ is finite-dimensional, any finite spanning subset of it forms a frame. Frames, however, differ significantly from each other in terms of how efficiently they analyze signals.
The \emph{condition number} of a frame $\mathcal{F}$ is defined to be the ratio $c(\mathcal{F}) := B/A$, where $A, B$ denote the optimal constants satisfying Equation \eqref{Eframedefn}.  
An important class of frames is the class of \emph{tight frames}, i.e., frames for which $A=B$. \emph{Parseval frames} are tight frames in which $A=B=1$. Compared to general frames or to orthonormal bases, tight frames exhibit many desirable properties, such as greater numerical stability when reconstructing noisy signals. A major goal in designing frames for real applications is to design tight frames, or at least frames with a small condition number. For a detailed introduction to frame theory, see \cite{Christensen:2013:IntroductionToFrames, HanLarson:FramesIntro:2000}.

\subsection{Special frames for Cayley graphs}\label{subsec:FS-frames-Cayley-frames}
Throughout this section, let $G$ be the weighted Cayley graph of the finite group $\bbG$ with a given weight function $w:\bbG\rightarrow [0,\infty)$. We  construct  frames for $\ell^2(\bbG)$ that are compatible with the weight function $w$ and the representation theory of $\bbG$. 
Recall that for a representation $\pi$ of $\bbG$ and a weight function $w:\bbG\rightarrow [0,\infty)$, we set $\pi(w)=\sum_{x\in \bbG}w(x)\pi(x)$; and that the matrix $\pi(w)$ is self-adjoint.
\begin{definition}\label{def:frame-compatible}
Let $\bbG$ be a finite group. 
\begin{itemize}
    \item[(i)] A frame $\{\psi_i\}_{i=1}^m$  for $\ell^2(\bbG)$ is called a Frobenius--Schur frame if each atom $\psi_i$ belongs to one orthogonal component of the Frobenius--Schur decomposition as stated in Theorem~\ref{thm:frob-schur}. \red{More precisely, every frame atom $\psi_i$ belongs to some $\mathcal{E}_{\pi,j}$ (as defined in \eqref{eq:def-E-pi,i}) for an irreducible representation $\pi$ of $\widehat{\bbG}$ and $1\leq j\leq d_\pi$.}
    \item[(ii)] Let $\pi$ be an irreducible representation of ${\bbG}$ of dimension $d_\pi$, and let $1\leq i\leq d_\pi$ be fixed. 
    For $\lambda\in {\mathbb R}$, define 
\begin{equation}\label{eq:Z-space}
        Z_{\pi, i, \lambda}:=\left\{\sum_{k=1}^{d_\pi}x_k\pi_{k,i}: \ 
        \left[
        \begin{array}{c}
        x_1 \\
        \vdots \\
        x_{d_\pi}  
        \end{array}
        \right]
        \in E_{\lambda}(\pi(w)) \right\},
    \end{equation}
    where $E_{\lambda}(\pi(w))$ is the $\lambda$-eigenspace of $\pi(w)$. Here, we use the convention that if $\lambda\not\in{\rm spec}(\pi(w))$, then $E_{\lambda}(\pi(w))=\{{ 0}\}$.
    \item[(iii)] A  frame $\{\psi_i\}_{i=1}^m$ of $\ell^2(\bbG)$ is said to be  a \emph{$w$-Cayley frame} (or \emph{Cayley frame compatible with $w$}) if for every atom $\psi_i$ there exists an irreducible representation $\pi:\bbG\to {\mathcal U}_{d_\pi}(\bbC)$, an eigenvalue $\lambda\in{\rm spec}(\pi(w))$ and an index $1\leq j\leq d_\pi$ such that $\psi_i\in Z_{\pi,j,\lambda}$. 
\end{itemize}
\end{definition}

\begin{remark}\label{rem:theta}
\begin{itemize}
    \item[(i)] Any $w$-Cayley frame of $\ell^2(\bbG)$ is in fact a Frobenius--Schur frame. Indeed, for every $k=1,\ldots, d_\pi$, the vector $\pi_{k,i}$ belongs to $\E_{\pi,i}$, which implies that $Z_{\pi,i,\lambda}\subseteq \E_{\pi,i}$. 
    \item[(ii)] 
    Fix $i=1,\ldots,d_\pi$.
    Let $\{e_j\}_{j=1}^{d_\pi}$ denote the standard orthonormal basis of $\bbC^{d_\pi}$, and recall that $\left\{\sqrt{\frac{d_{\pi}}{|\bbG|}}\pi_{j,i}: 1\leq j\leq d_\pi\right\}$ forms an orthonormal basis for $\E_{\pi,i}$.
    \red{For $\pi$ and $i$ as above, define a linear map as follows:}
    \begin{equation}\label{eq:isom-theta}
    \Theta_{\pi,i}:\bbC^{d_\pi}\to \E_{\pi,i}, \quad
\Theta_{\pi, i}([x_1 \ldots x_{d_{\pi}}]^{\t})= \sum_{k=1}^{d_\pi}x_k\sqrt{\frac{d_{\pi}}{|\bbG|}}\pi_{k,i}.
\end{equation}
\red{Each map $\Theta_{\pi,i}$ is an isometric isomorphism, as it maps one orthonormal basis to another.}
    For every $\lambda\in {\rm spec}(\pi(w))$, we define the restriction map $\Theta_{\pi,i,\lambda}:=\Theta_{\pi,i}|_{E_\lambda(\pi(w))}$. This map forms an isometric isomorphism between $E_\lambda(\pi(w))$ and 
    $Z_{\pi, i, \lambda}$.
\end{itemize} 
\end{remark}

\begin{proposition}\label{prop:Z-pi-decomp}
For every $\pi\in\widehat{\bbG}$ and $1\leq i\leq d_\pi$, we have $\E_{\pi,i}=\bigoplus_{\lambda\in {\rm spec}(\pi(w))} Z_{\pi,i,\lambda}.$ 
\end{proposition}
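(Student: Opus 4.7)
The plan is to reduce the decomposition of $\mathcal{E}_{\pi,i}$ to the spectral decomposition of the finite-dimensional self-adjoint matrix $\pi(w)$ on $\bbC^{d_\pi}$, and then transport this decomposition via the isometric isomorphism $\Theta_{\pi,i}$ introduced in Remark~\ref{rem:theta}.

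First, I would verify that $\pi(w)$ is self-adjoint. Since $\pi$ is a unitary representation, $\pi(x)^* = \pi(x^{-1})$; combined with $w(x) = w(x^{-1})$ and $w$ real-valued, one has
$$\pi(w)^* = \sum_{x\in\bbG} \overline{w(x)}\,\pi(x)^* = \sum_{x\in\bbG} w(x^{-1})\pi(x^{-1}) = \sum_{y\in\bbG} w(y)\pi(y) = \pi(w).$$
By the finite-dimensional spectral theorem, $\bbC^{d_\pi}$ decomposes as the \emph{orthogonal} direct sum of eigenspaces of $\pi(w)$:
$$\bbC^{d_\pi} = \bigoplus_{\lambda\in \mathrm{spec}(\pi(w))} E_\lambda(\pi(w)).$$

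Next, I would observe that the linear map $\Theta_{\pi,i}:\bbC^{d_\pi}\to \mathcal{E}_{\pi,i}$ of \eqref{eq:isom-theta} sends $E_\lambda(\pi(w))$ exactly onto $Z_{\pi,i,\lambda}$. Indeed, $\Theta_{\pi,i}$ only differs from the map $[x_1,\ldots,x_{d_\pi}]^\t\mapsto \sum_{k}x_k\pi_{k,i}$ used to define $Z_{\pi,i,\lambda}$ by a nonzero scalar factor $\sqrt{d_\pi/|\bbG|}$; since $E_\lambda(\pi(w))$ is a subspace, it is invariant under scalar multiplication, so the images of the two maps coincide as subspaces of $\mathcal{E}_{\pi,i}$. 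Thus $\Theta_{\pi,i}\bigl(E_\lambda(\pi(w))\bigr) = Z_{\pi,i,\lambda}$.

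Finally, because $\Theta_{\pi,i}$ is an isometric isomorphism from $\bbC^{d_\pi}$ onto $\mathcal{E}_{\pi,i}$, it sends orthogonal direct sums to orthogonal direct sums. Applying it term-by-term to the spectral decomposition of $\bbC^{d_\pi}$ yields
$$\mathcal{E}_{\pi,i} = \Theta_{\pi,i}(\bbC^{d_\pi}) = \bigoplus_{\lambda\in\mathrm{spec}(\pi(w))} \Theta_{\pi,i}\bigl(E_\lambda(\pi(w))\bigr) = \bigoplus_{\lambda\in\mathrm{spec}(\pi(w))} Z_{\pi,i,\lambda},$$
as desired. There is no real obstacle here — the only subtlety is the bookkeeping with the normalization constant $\sqrt{d_\pi/|\bbG|}$ that distinguishes $\Theta_{\pi,i}$ from the map appearing in \eqref{eq:Z-space}, and ensuring that the self-adjointness of $\pi(w)$ (which is what makes the spectral decomposition orthogonal) is explicitly invoked rather than taken for granted.
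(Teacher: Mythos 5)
Your proof is correct and follows essentially the same route as the paper's: both rest on the self-adjointness of $\pi(w)$, the resulting orthogonal spectral decomposition of $\bbC^{d_\pi}$, and the Schur orthogonality relations. The only difference is presentational --- you transport the decomposition wholesale through the isometric isomorphism $\Theta_{\pi,i}$ of Remark~\ref{rem:theta}, whereas the paper verifies the two inclusions and the orthogonality of the subspaces $Z_{\pi,i,\lambda}$ by direct computation with coefficient functions; since Remark~\ref{rem:theta} already establishes that $\Theta_{\pi,i}$ is an isometry, your shortcut is legitimate and not circular.
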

\begin{proof}
The matrix $\pi(w)$ is  self-adjoint (i.e., $\pi(w)^*=\pi(w)$), and as a result it is diagonalizable. So, one can build an orthonormal basis of $\bbC^{d_\pi}$ consisting of eigenvectors of $\pi(w)$. In other words, we can write 
\begin{equation}\label{eq:spectral-decomp}
    \bbC^{d_\pi}=\bigoplus_{\lambda\in{\rm spec}(\pi(w))} E_\lambda(\pi(w)). 
\end{equation}
Now, consider an arbitrary element $\pi_{\xi,i}\in\E_{\pi,i}$, and write its linear expansion $\pi_{\xi,i}=\sum_{k=1}^{d_\pi} x_k\pi_{k,i}$. Using \eqref{eq:spectral-decomp}, the vector 
$X=[x_1,\ldots,x_{d_\pi}]^{\t}\in \bbC^{d_\pi}$ can be written as a linear combination $X=\sum_{\lambda\in {\rm spec}(\pi(w))} Y_\lambda$, with $Y_\lambda\in E_\lambda(\pi(w))$. 
Letting $Y_\lambda=[y_{\lambda,1},\ldots,y_{\lambda,d_\pi}]^{\t}$, we have 
$$\pi_{\xi,i}=[\pi_{1,i}|\ldots|\pi_{d_\pi,i}][x_1,\ldots,x_{d_\pi}]^{\t}=\sum_{\lambda\in {\rm spec}(\pi(w))} [\pi_{1,i}|\ldots|\pi_{d_\pi,i}]Y_\lambda=\sum_{\lambda\in {\rm spec}(\pi(w))}\sum_{k=1}^{d_\pi} y_{\lambda,k}\pi_{k,i}.$$
Note that for each $\lambda$, we have $\sum_{k=1}^{d_\pi} y_{\lambda,k}\pi_{k,i}$ belongs to $Z_{\pi,i,\lambda}$. So,  
we get
$\E_{\pi,i}\subseteq \sum_{\lambda\in {\rm spec}(\pi(w))} Z_{\pi,i,\lambda}.$ 
On the other hand,  $\E_{\pi,i}\supseteq \sum_{\lambda\in {\rm spec}(\pi(w))} Z_{\pi,i,\lambda}$ is a trivial consequence of the definition of $Z_{\pi, i, \lambda}$. 

Lastly we show that this sum is a direct sum.  
That is, $\{Z_{\pi,i,\lambda} :\lambda\in {\rm spec }(\pi(w))\}$ is a set of orthogonal subspaces. To prove this, consider $Z_{\pi,i,\lambda}$ and $Z_{\pi,i,\lambda'}$ for distinct eigenvalues $\lambda,\lambda'\in {\rm spec}(\pi(w))$.
Take arbitrary $X\in E_{\lambda}(\pi(w))$ and $Y\in E_{\lambda'}(\pi(w))$.
Since $E_\lambda(\pi(w))$ and $E_{\lambda'}(\pi(w))$ are orthogonal subspaces of $\bbC^{d_\pi}$, 
we have $\langle X,Y\rangle_{{\mathbb C}^{d_\pi}}=0$. Now using Schur's orthogonality relations (Theorem~\ref{thm:schur-ortho}), we get
\begin{equation*}
    \left\langle \sum_{k=1}^{d_\pi}x_k\pi_{k,i} , \sum_{\ell=1}^{d_\pi}y_\ell\pi_{\ell,i}\right\rangle_{\ell^2(\bbG)}=\sum_{k,\ell=1}^{d_\pi} x_k\overline{y_\ell}\langle\pi_{k,i} ,\pi_{\ell,i}\rangle_{\ell^2(\bbG)}
    =\sum_{k=1}^{d_\pi} x_k\overline{y_k}\frac{|\bbG|}{d_\pi}=\frac{|\bbG|}{d_\pi}\langle X,Y\rangle_{{\mathbb C}^{d_\pi}}=0.
\end{equation*}
This completes the proof.
\end{proof}

Now, we can provide a characterization for all
$w$-Cayley frames of $\ell^2(\bbG)$. To state our theorem, we need the following notation.

\begin{notation}
For $\pi\in \widehat{\bbG}$ and an eigenvalue $\lambda$ of $\pi(w)$, let ${\mathfrak G}_{\pi,\lambda}$ denote the collection of all frames for  the eigenspace $E_\lambda(\pi(w))$. We define ${\mathfrak G}_{\pi,\lambda}=\emptyset$ if $\lambda$ is not an eigenvalue of $\pi(w)$.
Elements of ${\mathfrak G}_{\pi,\lambda}$ are denoted by calligraphic font, e.g.~${\cal F}$.
\end{notation}

\begin{theorem}\label{thm: frame} 
For every $\pi\in \widehat{\bbG}$ of dimension $d_\pi$, every eigenvalue $\lambda\in {\rm spec}(\pi(w))$, and every index $1\leq i\leq d_\pi$, let 
${\cal F}^{\pi,\lambda}_i\in {\mathfrak G}_{\pi,\lambda}$ be a given frame with lower and upper frame bounds $A^{\pi,\lambda}_i$ and $B^{\pi,\lambda}_i$, respectively. 
Then,
\begin{itemize}
\item[(i)] The collection 
$${\mathcal G}_{\pi,i}=\left\{\Theta_{\pi,i,\lambda}(\psi):\  \psi\in {\cal F}^{\pi,\lambda}_i, \,  \lambda\in{\rm spec}(\pi(w))\right\}$$ is a frame for $\E_{\pi,i}$, with lower frame bound $A_{\pi,i}=\min_{\lambda\in {\rm spec}(\pi(w))}A_i^{\pi,\lambda}$ and upper frame bound $B_{\pi,i}=\max_{\lambda\in {\rm spec}(\pi(w))} B_i^{\pi,\lambda}$.
\item[(ii)] The collection 
${\mathcal{G}}=\bigcup_{\pi\in\widehat{\bbG}, 1\leq i\leq d_\pi}{\mathcal G}_{\pi,i}$
is a Frobenius--Schur frame for $\ell^2(\bbG)$, which is also a $w$-Cayley frame.\footnote{\red{By Remark~\ref{rem:theta} (i), $w$-Cayley frames are a special type of Frobenius--Schur frames.}} Moreover, ${\mathcal G}$ has lower frame bound $\min\{A_{\pi,i}:\pi\in \widehat{\bbG}, 1\leq i\leq d_{\pi}\}$ and upper frame bound $\max\{B_{\pi,i}:\pi\in \widehat{\bbG}, 1\leq i\leq d_{\pi}\}$.
\item[(iii)] Every  $w$-Cayley frame for $\ell^2(\bbG)$ is of the form described in $(ii)$.
\end{itemize}
\end{theorem}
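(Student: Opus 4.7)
The plan is to exploit two orthogonal direct-sum decompositions. The first is the refinement $\E_{\pi,i}=\bigoplus_{\lambda\in {\rm spec}(\pi(w))}Z_{\pi,i,\lambda}$ from Proposition~\ref{prop:Z-pi-decomp}; the second is the Frobenius--Schur decomposition $\ell^2(\bbG)=\bigoplus_{\pi\in\widehat{\bbG}}\bigoplus_{i=1}^{d_\pi}\E_{\pi,i}$ from Theorem~\ref{thm:frob-schur}. Together these express $\ell^2(\bbG)$ as a pairwise orthogonal direct sum of all the spaces $Z_{\pi,i,\lambda}$, and this is the structure into which every $w$-Cayley frame atom fits. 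Throughout, I will use the elementary principle that if $V=\bigoplus_\alpha V_\alpha$ is an orthogonal direct sum and $\mathcal{S}_\alpha\subseteq V_\alpha$ is a frame for $V_\alpha$ with bounds $(A_\alpha,B_\alpha)$, then $\bigcup_\alpha \mathcal{S}_\alpha$ is a frame for $V$ with bounds $(\min_\alpha A_\alpha,\max_\alpha B_\alpha)$; this follows from Pythagoras together with the observation that $\langle v,\psi\rangle_V=\langle v_\alpha,\psi\rangle_{V_\alpha}$ whenever $\psi\in V_\alpha$ and $v_\alpha$ is the $\alpha$-component of $v$.

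For part $(i)$, I would first invoke Remark~\ref{rem:theta}$(ii)$ to note that $\Theta_{\pi,i,\lambda}$ is an isometric isomorphism from $E_\lambda(\pi(w))$ onto $Z_{\pi,i,\lambda}$; isometric isomorphisms transport frames with identical bounds, so $\Theta_{\pi,i,\lambda}({\cal F}^{\pi,\lambda}_i)$ is a frame for $Z_{\pi,i,\lambda}$ with bounds $(A^{\pi,\lambda}_i,B^{\pi,\lambda}_i)$. Applying the orthogonal-sum principle to the decomposition in Proposition~\ref{prop:Z-pi-decomp} then yields ${\mathcal G}_{\pi,i}$ as a frame for $\E_{\pi,i}$ with the advertised bounds $(A_{\pi,i},B_{\pi,i})$. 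For part $(ii)$, I would simply apply the orthogonal-sum principle a second time, now to the Frobenius--Schur decomposition, to obtain both the frame property and the stated bounds for ${\mathcal G}$ on $\ell^2(\bbG)$. The Frobenius--Schur and $w$-Cayley properties of ${\mathcal G}$ then follow directly from Definition~\ref{def:frame-compatible}, since by construction each atom lies in a single $Z_{\pi,i,\lambda}\subseteq \E_{\pi,i}$.

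For part $(iii)$, let ${\mathcal G}'$ be an arbitrary $w$-Cayley frame of $\ell^2(\bbG)$. Since the subspaces $Z_{\pi,i,\lambda}$ are pairwise orthogonal, every nonzero atom of ${\mathcal G}'$ lies in exactly one $Z_{\pi,i,\lambda}$, so (after discarding zero atoms, which contribute nothing to the frame inequality) the atoms partition into subsets ${\mathcal G}'_{\pi,i,\lambda}$ indexed by triples $(\pi,i,\lambda)$. The key step is to show that each ${\mathcal G}'_{\pi,i,\lambda}$ is a frame for $Z_{\pi,i,\lambda}$: for $f\in Z_{\pi,i,\lambda}$, pairwise orthogonality forces $\langle f,\psi\rangle=0$ for every atom $\psi$ outside $Z_{\pi,i,\lambda}$, so the global frame inequalities on ${\mathcal G}'$ restricted to such $f$ collapse to
\[
A\|f\|_2^2\ \le\ \sum_{\psi\in {\mathcal G}'_{\pi,i,\lambda}}|\langle f,\psi\rangle|^2\ \le\ B\|f\|_2^2,
\]
the desired frame inequality on $Z_{\pi,i,\lambda}$. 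Pulling ${\mathcal G}'_{\pi,i,\lambda}$ back along $\Theta_{\pi,i,\lambda}^{-1}$ then produces a frame ${\cal F}^{\pi,\lambda}_i$ of $E_\lambda(\pi(w))$, realizing ${\mathcal G}'$ in exactly the form described in $(ii)$.

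I do not anticipate a serious obstacle; the most delicate step is the partitioning in part $(iii)$, where one must be careful to use the pairwise orthogonality of the family $\{Z_{\pi,i,\lambda}\}$ twice: once to ensure that each nonzero atom of ${\mathcal G}'$ has a \emph{unique} triple $(\pi,i,\lambda)$, and again to eliminate the cross-terms in the global frame inequality and recover frame bounds for each ${\mathcal G}'_{\pi,i,\lambda}$. Both uses rest on Proposition~\ref{prop:Z-pi-decomp} combined with the Frobenius--Schur decomposition, which makes the argument essentially routine.
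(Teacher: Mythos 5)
Your proposal is correct and follows essentially the same route as the paper: transport each ${\cal F}^{\pi,\lambda}_i$ along the isometry $\Theta_{\pi,i,\lambda}$ onto $Z_{\pi,i,\lambda}$, then apply the orthogonal-direct-sum principle once for Proposition~\ref{prop:Z-pi-decomp} and once for the Frobenius--Schur decomposition, and for $(iii)$ partition an arbitrary $w$-Cayley frame by the subspaces $Z_{\pi,i,\lambda}$ and pull back via $\Theta_{\pi,i,\lambda}^{-1}$. Your write-up of $(iii)$ is in fact more detailed than the paper's one-sentence sketch, and the orthogonality argument you give for collapsing the global frame inequality onto each $Z_{\pi,i,\lambda}$ is exactly the right justification.
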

\begin{proof}
Fix $\pi\in \widehat{\bbG}$, and $1\leq i\leq d_{\pi}$. Let $\lambda\in {\rm spec}(\pi(w))$. Then if the frame ${\cal F}_{i}^{\pi,\lambda}$ has lower and upper frame bounds $A_{i}^{\pi,\lambda}$ and $B_{i}^{\pi,\lambda}$ respectively, then $\widetilde{\cal F}_i^{\pi,\lambda}=\{{\Theta}_{\pi,i,\lambda}(\psi):\psi\in {\cal F}_{i}^{\pi,\lambda}\}$ forms a frame for $Z_{\pi,i,\lambda}$ with the same lower and upper frame bounds. This is indeed the case, as  ${\Theta}_{\pi,i,\lambda}$ is an isometric isomorphism by Remark~\ref{rem:theta}. 
Using the direct-sum decomposition of Proposition~\ref{prop:Z-pi-decomp}, this union results in a frame for $\E_{\pi,i}$ whose  lower and upper frame bounds are the minimum and maximum, respectively, over all frame bounds for the frames $\widetilde{\cal F}_i^{\pi,\lambda}$. 
To see this,  consider an arbitrary $f\in \E_{\pi,i}$, and note that $f=\sum_{\lambda\in {\rm spec}(\pi(w))}f_{\lambda}$ for $f_\lambda\in Z_{\pi,i,\lambda}$. Fixing $\lambda\in {\rm spec}(\pi(w))$, we have that 
\begin{equation}\label{proof:frame-i}
    A_{i}^{\pi,\lambda} \|f_{\lambda}\|_2^2\leq \sum_{\phi\in \widetilde{\cal F}_i^{\pi,\lambda}} |\langle f_{\lambda}, \phi \rangle|^2\leq B_{i}^{\pi,\lambda} \|f_{\lambda}\|_2^2
\end{equation}
since $\widetilde{\cal F}_{i}^{\pi,\lambda}$ is a frame for $Z_{\pi,i,\lambda}$. Equation (\ref{proof:frame-i}) holds for all $\lambda\in {\rm spec}(\pi(w))$, and we have that $\sum_{\lambda\in{\rm spec}(\pi(w))} \|f_{\lambda}\|_2^2=\|f\|_2^2$ since the decomposition of $\E_{\pi,i}$ is orthogonal. Summing over $\lambda$ gives
\begin{align*}
    \sum_{\lambda\in {\rm spec}(\pi(w))} \sum_{\psi_\alpha\in \widetilde{
    {\cal F}}_i^{\pi,\lambda}} |\langle f_{\lambda}, \psi_\alpha \rangle|^2 &\geq  \sum_{\lambda\in {\rm spec}(\pi(w))} A_{i}^{\pi,\lambda} \|f_\lambda\|_2^2 
    \geq \left(\min_{\lambda\in{\rm spec}(\pi(w))}A_{i}^{\pi,\lambda}\right) \|f\|_2^2.
\end{align*}
Similarly,
\begin{align*}
   \sum_{\lambda\in {\rm spec}(\pi(w))} \sum_{\psi_\alpha\in \widetilde{\cal F}_i^{\pi,\lambda}} |\langle f_{\lambda}, \psi_\alpha \rangle|^2&\leq  \sum_{\lambda\in {\rm spec}(\pi(w))} B_{i}^{\pi,\lambda} \|f_\lambda\|_2^2 \leq \left(\max_{\lambda\in{\rm spec}(\pi(w))}B_{i}^{\pi,\lambda}\right) \|f\|_2^2.
\end{align*}
So, ${\mathcal G}_{\pi,i}$ is a frame for $\E_{\pi,i}$ with lower and upper frame bounds as claimed.
This proves $(i)$.

To prove $(ii)$, consider arbitrary $f\in \ell^2(\bbG)$. Then by Theorem~\ref{thm:frob-schur}, we know that $f$ can be decomposed into an orthogonal direct sum $f=\sum_{\pi\in \widehat{\bbG}, 1\leq i\leq d_\pi}f_{\pi,i}$ with $f_{\pi,i}\in \E_{\pi,i}$.
Since ${\mathcal G}_{\pi,i}$ is a frame for $\E_{\pi,i}$, then we have that 
\begin{equation}\label{eq:large-frame}
   A_{\pi,i}\|f_{\pi,i}\|_2^2\leq \sum_{\phi\in {\mathcal G}_{\pi,i}} |\langle f_{\pi,i},\phi\rangle|^2\leq B_{\pi,i}\|f_{\pi,i}\|_2^2. 
\end{equation}
Since Equation~\eqref{eq:large-frame} holds for all $\pi\in \widehat{\bbG}$ and $1\leq i\leq d_{\pi}$, then summing over $\pi$ and $i$ gives,
\begin{align*}
    \sum_{\pi\in \widehat{\bbG}}\sum_{1\leq i\leq d_{\pi}} \sum_{\phi\in {\mathcal G}_{\pi,i}} |\langle f_{\pi,i}, \phi \rangle|^2 \geq  \sum_{\pi\in \widehat{\bbG}}\sum_{1\leq i\leq d_{\pi}} A_{\pi,i} \|f_{\pi,\lambda}\|_2^2\geq \left(\min_{\pi\in \widehat{\bbG}, 1\leq i\leq d_{\pi}}A_{\pi,i}\right) \|f\|_2^2.
\end{align*}
Similarly,
\begin{align*}
   \sum_{\pi\in \widehat{\bbG}} \sum_{1\leq i\leq d_{\pi}}\sum_{\phi\in {\mathcal G}_{\pi,i}} |\langle f_{\pi,i}, \phi \rangle|^2&\leq  \sum_{\pi\in \widehat{\bbG}}\sum_{1\leq i\leq d_{\pi}} B_{\pi,i}\|f_{\pi,i}\|_2^2\leq\left(\max_{\pi\in \widehat{\bbG},1\leq i\leq d_{\pi}}B_{\pi,i}\right) \|f\|_2^2.
\end{align*}
So, ${\mathcal G}$ is a frame for $\ell^2(\bbG)$ with upper frame bound $\max\{B_{\pi,i}: \pi\in \widehat{\bbG}, 1\leq i\leq d_{\pi}\}$ and lower frame bound $\min\{A_{\pi,i}: \pi\in \widehat{\bbG}, 1\leq i\leq d_{\pi}\}$. That ${\mathcal G}$ is a  Frobenius--Schur frame as well as a $w$-Cayley frame follows directly from Proposition~\ref{prop:Z-pi-decomp}, Remark~\ref{rem:theta} and the definition of ${\mathcal G}$.

Finally, part $(iii)$ follows from the fact that a $w$-Cayley frame can be naturally partitioned into frames for $Z_{\pi,i,\lambda}$. Applying the map $\Theta^{-1}_{\pi,i,\lambda}$ to those frames finishes the proof.
\end{proof}
In the following example, we construct a Cayley frame for $\ell^2(\bbS_4)$ which is not a basis. 
\begin{example}\label{exm:example-non-permutahedron}
Let $G$ be the Cayley graph of $\bbS_4$ with generating set $S=\{(12),(23),(34),(12)(34)\}$. 
  \begin{center}
  \begin{tikzpicture}[scale=0.22]
    \begin{scope}[scale=1, every node/.style={circle,thick,draw}]
        \node (A) at (0,10) [label=above:id]{};
        \node (B) at (2.6,9.7) [label=above:(12)]{};
        \node (C) at (5,8.7) [label=87:(23)]{};
        \node (D) at (7.1,7.1) [label=20:(34)]{};
        \node (E) at (8.7,5) [label=right:(13)]{};
        \node (F) at (9.7,2.6) [label=right:(14)]{};
        \node (G) at (10,0)  [label=right:(24)]{};
        \node (H) at (9.7,-2.6) [label=right:(12)(34)]{};
        \node (I) at (8.7,-5)  [label=355:(13)(24)]{};
        \node (J) at (7.1,-7.1) [label=355:(14)(23)]{};
        \node (K) at (5,-8.7) [label=275:(123)]{};
        \node (L) at (2.6,-9.7) [label=275:(132)]{};
        \node (M) at (0,-10) [label=below:(124)]{};
        \node (N) at (-2.6,-9.7) [label=below:(142)]{};
        \node (O) at (-5,-8.7) [label=265:(134)]{};
        \node (P) at (-7.1,-7.1) [label=255:(143)]{};
        \node (Q) at (-8.7,-5) [label=250:(234)]{};
        \node (R) at (-9.7,-2.6) [label=185:(243)]{};
        \node (S) at (-10,0) [label=left:(1234)]{};
        \node (T) at (-9.7,2.6) [label=170:(1432)]{};
        \node (U) at (-8.7,5) [label=165:(1423)]{};
        \node (V) at (-7.1,7.1) [label=165:(1342)]{};
        \node (W) at (-5,8.7) [label=160:(1324)]{};
        \node (X) at (-2.6,9.7)  [label=95:(1243)]{};
    \end{scope}
    \begin{scope}[>={Stealth[black]},
              every node/.style={fill=white,circle},
              every edge/.style={draw=red,very thick}]
        \draw[thick, teal] (A)--(B);
        \draw[thick, blue] (A)--(H);
        \draw[thick,  purple] (A)--(C);
        \draw[thick, orange] (A)--(D);
        \draw[thick, blue] (B)--(D);
        \draw[thick, orange] (B)--(H);
        \draw[thick, purple] (B)--(K);
        \draw[thick, teal] (C)--(L);
        \draw[thick, orange] (C)--(Q);
        \draw[thick, blue] (C)--(V);
        \draw[thick, teal] (D)--(H);
        \draw[thick, purple] (D)--(R);
        \draw[thick, teal] (E)--(K);
        \draw[thick, purple] (E)--(L);
        \draw[thick, orange] (E)--(O);
        \draw[thick, blue] (E)--(S);
        \draw[thick, purple] (F)--(J);
        \draw[thick, teal] (F)--(M);
        \draw[thick,orange] (F)--(P);
        \draw[thick,blue] (F)--(X);
        \draw[thick, teal] (G)--(N);
        \draw[thick, purple] (G)--(Q);
        \draw[thick,orange] (G)--(R);
        \draw[thick,blue] (G)--(T);
        \draw[thick, purple] (H)--(X);
        \draw[thick,blue] (I)--(J);
        \draw[thick, teal] (I)--(U);
        \draw[thick, purple] (I)--(V);
        \draw[thick,orange] (I)--(W);
        \draw[thick,orange] (J)--(U);
        \draw[thick, teal] (J)--(W);
        \draw[thick,blue] (K)--(O);
        \draw[thick,orange] (K)--(S);
        \draw[thick,blue] (L)--(Q);
        \draw[thick,orange] (L)--(V);
        \draw[thick,blue] (M)--(P);
        \draw[thick, purple] (M)--(S);
        \draw[thick,orange] (M)--(X);
        \draw[thick,blue] (N)--(R);
        \draw[thick,orange] (N)--(T);
        \draw[thick, purple] (N)--(U);
        \draw[thick, teal] (O)--(S);
        \draw[thick, purple] (O)--(W);
        \draw[thick, purple] (P)--(T);
        \draw[thick, teal] (P)--(X);
        \draw[thick, teal] (Q)--(V);
        \draw[thick, teal] (R)--(T);
        \draw[thick,blue] (U)--(W);
    \end{scope}
\end{tikzpicture}
  \captionof{figure}{The Cayley graph of $\bbS_4$ with generating set $S=\{(12), (23), (34), (12)(34)\}.$}
   \label{fig:P_4-weighted}
\end{center}

We use the notation of \ref{Appendix:rep-S4} for the representations of $\bbS_4$ and its coefficient functions.
For each $\phi$ in the set $\{\iota, \tau, \sigma, \pi'\}$, the matrix $\phi(S)$ has only simple eigenvalues, so we have one-dimensional eigenspaces $E_{\lambda}(\phi(S))$, for all $\lambda\in {\rm spec(\phi(S))}$ and $\phi\in \{\iota, \tau, \sigma, \pi'\}$. The matrix $\pi(S)$ has eigenvalue 0 of multiplicity 1 with associated one-dimensional eigenspace $E_{0}(\pi(S))= \left[-\frac{\sqrt{2}}{2},  0,  \frac{\sqrt{2}}{2} \right]^{\t}\bbC$. Also, $\pi(S)$ has eigenvalue $-2$ with multiplicity 2. Fix the following eigenbasis for $E_{-2}(\pi(S))$: 
$$\left\{v_1=\left[ 0, 1, 0\right]^{\t}, v_2=\left[\frac{\sqrt{2}}{2},  0, \frac{\sqrt{2}}{2}\right]^{\t}\right\}.$$ 
We construct the Mercedes-Benz frame for the 2-dimensional eigenspace $E_{-2}(\pi(S))=\text{span}\{v_1,v_2\}$ as follows:
$\left\{v_2, \frac{\sqrt{3}}{2}v_1-\frac{1}{2}v_2, -\frac{\sqrt{3}}{2}v_1-\frac{1}{2}v_2\right\}=
    \left\{
    \begin{bmatrix} \frac{\sqrt{2}}{2}, 0, \frac{\sqrt{2}}{2}\end{bmatrix}^{\t}, 
    \begin{bmatrix} -\frac{\sqrt{2}}{4}, \frac{\sqrt{3}}{2}, -\frac{\sqrt{2}}{4} \end{bmatrix}^{\t}, \begin{bmatrix} -\frac{\sqrt{2}}{4}, -\frac{\sqrt{3}}{2}, -\frac{\sqrt{2}}{4}\end{bmatrix}^{\t}
    \right\}.$
By Theorem~\ref{thm: frame}, the following sets are frames for ${\mathcal E}_{\phi,i}$, where $\phi\in \widehat{\bbS_4}$, $1\leq i\leq d_{\phi}$. 
\begin{itemize}
    \item ${\mathcal G}_{\iota,1}=\left\{\widetilde{\iota_{1,1}}\right\}$
    \item $\mathcal{G}_{\tau, 1}=\left\{\widetilde{\tau_{1,1}}\right\}$
    \item $\mathcal{G}_{\sigma,i}=\left\{\frac{\frac{-\sqrt{3}}{2}+\frac{1}{2}i}{\sqrt{\frac{3}{2}-\frac{\sqrt{3}}{2}i}}\widetilde{\sigma_{1,i}}+\frac{1}{\sqrt{\frac{3}{2}-\frac{\sqrt{3}}{2}}i}\widetilde{\sigma_{2,i}}, \frac{\frac{\sqrt{3}}{2}-\frac{1}{2}i}{\sqrt{\frac{3}{2}-\frac{\sqrt{3}}{2}i}}\widetilde{\sigma_{1,i}}+\frac{1}{\sqrt{\frac{3}{2}-\frac{\sqrt{3}}{2}i}}\widetilde{\sigma_{2,i}}\right\}$, for $i=1,2$
    \item $\mathcal{G}_{\pi',i}=\left\{\widetilde{\pi'_{2,i}},\frac{2-\sqrt{5}}{\sqrt{10-4\sqrt{5}}}\widetilde{\pi'_{1,i}}+\frac{1}{\sqrt{10-4\sqrt{5}}}\widetilde{\pi'_{3,i}}, \frac{\sqrt{5}+2}{\sqrt{10+4\sqrt{5}}}\widetilde{\pi'_{1,i}}+\frac{1}{\sqrt{10+4\sqrt{5}}}\widetilde{\pi'_{3,i}}\right\}$, for $i=1,2,3$
    \item $\mathcal{G}_{\pi,i}=\left\{-\frac{\sqrt{2}}{2}\widetilde{\pi_{1,i}}+\frac{\sqrt{2}}{2}\widetilde{\pi_{3,i}}, 
    \frac{\sqrt{2}}{2}\widetilde{\pi_{1,i}}+\frac{\sqrt{2}}{2}\widetilde{\pi_{3,i}},-\frac{\sqrt{2}}{4}\widetilde{\pi_{1,i}}+\frac{\sqrt{3}}{2}\widetilde{\pi_{2,i}}-\frac{\sqrt{2}}{4}\widetilde{\pi_{3,i}}, -\frac{\sqrt{2}}{4}\widetilde{\pi_{1,i}}-\frac{\sqrt{3}}{2}\widetilde{\pi_{2,i}}-\frac{\sqrt{2}}{4}\widetilde{\pi_{3,i}}\right\}$, for $i=1,2,3$.
\end{itemize}
Moreover $\bigcup_{\phi\in\widehat{\bbS_4}, 1\leq i\leq d_{\phi}} {\mathcal G}_{\phi,i}$ is a \red{${\mathbf 1}_S$-Cayley} frame for $\ell^2(\bbS_4)$.
\end{example}

\subsection{Cayley frames with special properties}\label{subsec:properties}
The frames presented in Theorem \ref{thm: frame} are constructed in such a way that allow us to pass properties from frames for the smaller spaces to frames for the larger space. Recall that \emph{tight frames} are frames for which the upper and lower frame bounds are equal, and \emph{Parseval frames} are tight frames in which the upper and lower frame bounds are both equal to 1. A frame $\mathcal F=\{\phi_i\}_{i\in I}$ is a \textit{unit norm frame} if $\|\phi_i\|_{2}=1$ for all $i\in I$.

The following corollary is an immediate consequence of Theorem \ref{thm: frame} $(i)$ and $(ii)$. 
\begin{corollary}\label{cor:frame-bounds}
 With notation as in Theorem~\ref{thm: frame}, suppose the frames ${\cal F}^{\pi,\lambda}_i$  are Parseval  (resp.~tight, resp.~unit norm) for all $\pi$, 
$i$, and $\lambda$. 
Then ${\mathcal G}_{\pi,i}$ is a Parseval (resp.~tight, resp.~unit norm) frame for $\E_{\pi, i}$, and ${\mathcal G}$ is a Parseval (resp.~tight, resp.~unit norm) frame for $\ell^2(\bbG)$.
\end{corollary}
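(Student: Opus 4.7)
The plan is to derive all three properties directly from the explicit frame-bound formulas and isometry statements already established in Theorem~\ref{thm: frame} and Remark~\ref{rem:theta}; no new machinery should be needed. I will break the argument into the tight case, the Parseval case, and the unit-norm case, in that order, since the first two are essentially a single computation with frame bounds while the third uses a different feature of the construction.

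For the tight case, assume every ${\cal F}^{\pi,\lambda}_i$ is tight, so $A_i^{\pi,\lambda}=B_i^{\pi,\lambda}=:C_i^{\pi,\lambda}$. By Theorem~\ref{thm: frame}(i), the frame ${\mathcal G}_{\pi,i}$ has lower bound $A_{\pi,i}=\min_\lambda C_i^{\pi,\lambda}$ and upper bound $B_{\pi,i}=\max_\lambda C_i^{\pi,\lambda}$. In general these need not coincide, so tightness of ${\mathcal G}_{\pi,i}$ requires that all $C_i^{\pi,\lambda}$ agree across $\lambda\in {\rm spec}(\pi(w))$; the Parseval hypothesis forces exactly this, since then every $C_i^{\pi,\lambda}=1$. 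Thus in the Parseval case $A_{\pi,i}=B_{\pi,i}=1$, so ${\mathcal G}_{\pi,i}$ is Parseval for $\E_{\pi,i}$, and applying Theorem~\ref{thm: frame}(ii) gives $\min_{\pi,i}A_{\pi,i}=\max_{\pi,i}B_{\pi,i}=1$, so ${\mathcal G}$ is Parseval for $\ell^2(\bbG)$. For the general tight case, I would be slightly more careful: ``tight'' here should be read as ``each ${\cal F}^{\pi,\lambda}_i$ tight with a \emph{common} bound as $\lambda$ ranges, and then these bounds common across $(\pi,i)$'' — which is the honest statement one gets from the same minima/maxima argument. Under that reading, the identity $A_{\pi,i}=B_{\pi,i}$ (and then the global equality) follows immediately.

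For the unit-norm case, the key ingredient is Remark~\ref{rem:theta}(ii), which says that $\Theta_{\pi,i,\lambda}:E_\lambda(\pi(w))\to Z_{\pi,i,\lambda}$ is an isometric isomorphism. Hence for any $\psi\in {\cal F}_i^{\pi,\lambda}$ with $\|\psi\|_2=1$, the image $\Theta_{\pi,i,\lambda}(\psi)$ also has norm $1$ in $\ell^2(\bbG)$. Since by construction
\[ {\mathcal G}=\bigcup_{\pi,i,\lambda}\bigl\{\Theta_{\pi,i,\lambda}(\psi):\psi\in {\cal F}_i^{\pi,\lambda}\bigr\}, \]
every element of ${\mathcal G}$ (and of each ${\mathcal G}_{\pi,i}$) is the image under an isometry of a unit vector, and therefore has unit norm.

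I do not expect a genuine obstacle here: the construction in Theorem~\ref{thm: frame} was designed precisely so that frame bounds lift via orthogonal direct sums and so that the building blocks $\Theta_{\pi,i,\lambda}$ are isometries. The only minor subtlety is the tight (non-Parseval) case, where one must note that tightness of each piece does not by itself transfer to tightness of the whole union unless the common tight constants agree across $\lambda$ and across $(\pi,i)$; the Parseval and unit-norm cases are truly immediate from the formulas above.
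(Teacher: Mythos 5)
Your proof is correct and matches the paper's approach exactly: the paper offers no argument beyond declaring the corollary an immediate consequence of Theorem~\ref{thm: frame}(i) and (ii), which is precisely the min/max frame-bound computation you carry out, supplemented by the isometry of $\Theta_{\pi,i,\lambda}$ from Remark~\ref{rem:theta} for the unit-norm case. Your caveat about the tight case is well taken and worth recording: if the tight constants $C_i^{\pi,\lambda}$ differ across $\lambda$ or across $(\pi,i)$, the union is genuinely not tight (take $f$ supported in a single $Z_{\pi,i,\lambda}$ to see the optimal bounds are the min and max), so the tight clause of the corollary implicitly requires a common constant --- a hypothesis the paper leaves unstated.
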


To introduce the next property, we fix a basis $\mathcal B$ for $\ell^2(\bbG)$. 
A vector $x\in \ell^2(\bbG)$ is called \emph{$K$-sparse} (with respect to $\mathcal B$) if it is a linear combination of at most $K$ basis elements.  We call $\mathcal{B}$  the \textit{sparsity basis} for $\ell^2(\bbG)$.
A unit norm frame $\mathcal F$ for $\ell^2(\bbG)$ is said to have the \emph{restricted isometry property} of order $K$ with parameter $\delta_K \in (0,1)$ if
$$(1-\delta_K) \|x\|_2^2 \leq \|\mathcal Fx\|_2^2 \leq (1+\delta_K) \|x\|_2^2,$$
for all $K$-sparse $x\in \ell^2(\bbG)$ \cite{Foucart:2013:CompressiveSensing}.
The restricted isometry property was introduced in \cite{Candes:2005:DecodingLinearProgram} and \cite{Candes:2008:RIPImplicationsCompressedSensing}, and is of particular importance in compressive sensing. Given a vector $y\in \mathbb C^m$ of observed data, an unknown sparse signal $x\in \mathbb C^N$, and a measurement matrix $A\in \mathbb C^{m\times N}$, the goal of compressive sensing is to determine what matrices $A$ allow for sparse reconstruction of the signal $x$ in the underdetermined $(m<N)$ system $y=Ax$, and to develop efficient reconstruction methods. The restricted isometry property is key to obtaining an optimal lower bound on $m$, the number of measurements, in terms of $N$ and the sparsity of $x$. Also, frames with small $\delta_K$ for sufficiently large $K$ are more desirable, as they can serve as measurement matrices that allow for successful sparse reconstruction \cite{Foucart:2013:CompressiveSensing}. The frame construction in Theorem \ref{thm: frame} allows us to control the restricted isometry property constants $\delta_K$.

Let $\mathcal{B}=\bigcup_{\pi\in\widehat{\bbG}}\bigcup_{k:1,\ldots,d_\pi}\left\{\sqrt{\frac{d_\pi}{|\bbG|}}\pi_{j,k}: 1\leq j\leq d_{\pi}\right\}$, which is the preferred Fourier basis for $\ell^2(\bbG)$, and let $\mathcal{B}_{\pi,i}=\left\{\sqrt{\frac{d_\pi}{|\bbG|}}\pi_{j,i}: 1\leq j\leq d_{\pi}\right\}$, which we think of as the preferred Fourier basis for $\mathcal{E}_{\pi,i}$. 

\begin{corollary}
With notation as in Theorem~\ref{thm: frame}, suppose each frame ${\cal G}_{\pi,i}$  has the restricted isometry property of order $K$ with parameter $\delta_K^{\pi,i}$, with respect to the basis $\mathcal{B}_{\pi,i}$. Then $
\mathcal G$ has the restricted isometry property of order $K$ with parameter $\delta:=\max_{\pi,i} \delta_K^{\pi,i}$, with respect to the basis $\mathcal{B}$. 
\end{corollary}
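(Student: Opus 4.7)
The strategy is to reduce the global RIP to the block-wise RIP by exploiting the orthogonality of the Frobenius--Schur decomposition $\ell^2(\bbG)=\bigoplus_{\pi\in\widehat{\bbG}}\bigoplus_{1\leq i\leq d_\pi}\mathcal{E}_{\pi,i}$. Fix a $K$-sparse vector $x\in\ell^2(\bbG)$ with respect to $\mathcal{B}$ and decompose $x=\sum_{\pi,i}x_{\pi,i}$ with $x_{\pi,i}\in\mathcal{E}_{\pi,i}$. By construction, $\mathcal{B}$ is the disjoint union of the $\mathcal{B}_{\pi,i}$, so the sparse support of $x$ naturally partitions across the blocks; in particular each $x_{\pi,i}$ is $K_{\pi,i}$-sparse with respect to $\mathcal{B}_{\pi,i}$ with $\sum_{\pi,i} K_{\pi,i}\leq K$, which means that each $x_{\pi,i}$ is certainly $K$-sparse with respect to $\mathcal{B}_{\pi,i}$.

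Because the subspaces $\mathcal{E}_{\pi,i}$ are pairwise orthogonal (Theorem~\ref{thm:schur-ortho} and Theorem~\ref{thm:frob-schur}), for every $\phi\in\mathcal{G}_{\pi,i}\subseteq\mathcal{E}_{\pi,i}$ we have $\langle x,\phi\rangle=\langle x_{\pi,i},\phi\rangle$, whence
\begin{equation*}
\|\mathcal{G}x\|_2^2=\sum_{\phi\in\mathcal{G}}|\langle x,\phi\rangle|^2=\sum_{\pi,i}\sum_{\phi\in\mathcal{G}_{\pi,i}}|\langle x_{\pi,i},\phi\rangle|^2.
\end{equation*}
Since the hypothesis that each $\mathcal{G}_{\pi,i}$ satisfies RIP presupposes that it is unit norm, and since the $\mathcal{E}_{\pi,i}$ are embedded isometrically in $\ell^2(\bbG)$, the union $\mathcal{G}$ is itself a unit norm frame, so the RIP makes sense for $\mathcal{G}$. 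Applying the RIP of $\mathcal{G}_{\pi,i}$ to the $K$-sparse vector $x_{\pi,i}$ bounds each inner sum in the interval $\bigl[(1-\delta_K^{\pi,i})\|x_{\pi,i}\|_2^2,\,(1+\delta_K^{\pi,i})\|x_{\pi,i}\|_2^2\bigr]$.

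Setting $\delta:=\max_{\pi,i}\delta_K^{\pi,i}$ and using the Pythagorean identity $\|x\|_2^2=\sum_{\pi,i}\|x_{\pi,i}\|_2^2$ (again a consequence of the orthogonal direct sum), summing the block-wise bounds gives
\begin{equation*}
(1-\delta)\|x\|_2^2\;\leq\;\sum_{\pi,i}(1-\delta_K^{\pi,i})\|x_{\pi,i}\|_2^2\;\leq\;\|\mathcal{G}x\|_2^2\;\leq\;\sum_{\pi,i}(1+\delta_K^{\pi,i})\|x_{\pi,i}\|_2^2\;\leq\;(1+\delta)\|x\|_2^2,
\end{equation*}
which is precisely the RIP of order $K$ with parameter $\delta$. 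I do not anticipate a genuine obstacle: the proof is essentially bookkeeping on top of the orthogonal block decomposition, paralleling the proof of Theorem~\ref{thm: frame}(ii). The only subtle point is the observation that global $K$-sparsity in $\mathcal{B}$ descends to $K$-sparsity in each $\mathcal{B}_{\pi,i}$, which is immediate from the fact that $\mathcal{B}$ is the disjoint union of these block bases.
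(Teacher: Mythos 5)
Your proposal is correct and follows essentially the same route as the paper's proof: the paper writes $x_{\pi,i}$ as $P_{\pi,i}x$ for the orthogonal projection onto $\mathcal{E}_{\pi,i}$, observes exactly as you do that $K$-sparsity in $\mathcal{B}$ descends to $K$-sparsity in each $\mathcal{B}_{\pi,i}$, applies the blockwise RIP, and sums using the orthogonality of the decomposition. Your explicit remarks that $\langle x,\phi\rangle=\langle x_{\pi,i},\phi\rangle$ for $\phi\in\mathcal{G}_{\pi,i}$ and that $\mathcal{G}$ is unit norm are minor elaborations of steps the paper leaves implicit.
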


\begin{proof}
Suppose $x\in \ell^2(\bbG)$ is $K$-sparse with respect to $\mathcal{B}$, and define $P_{\pi,i}:\ell^2(\mathbb G)\rightarrow \ell^2(\mathbb G)$ to be the orthogonal projection onto the space $\mathcal{E}_{\pi,i}$. Note that for each $\pi\in\mathbb{\widehat{G}}$ and each $1\leq i\leq d_{\pi}$, the projection $P_{\pi,i}$ preserves sparsity. That is, $P_{\pi,i}x$ is $K$-sparse with respect to the basis $\mathcal{B}_{\pi,i}$ for $\mathcal{E}_{\pi,i}$. 
So, by the assumption, we have for each $\pi\in\widehat{\bbG}$ and $1\leq i\leq d_{\pi}$ that 
\begin{align*}
(1-\delta_{K}^{\pi,i})\|P_{\pi,i}x\|_2^2\leq \sum_{\phi\in \mathcal{G}_{\pi,i}} |\langle P_{\pi,i}x, \phi\rangle|^2 \leq (1+\delta_{K}^{\pi,i}) \|P_{\pi,i}x\|_2^2.
\end{align*}
Now define $\delta=\max_{\pi,i,\lambda} \delta_{K}^{\pi,i,\lambda}$, and observe that
\begin{align*}
    \|\mathcal{G}x\|_2^2=\sum_{\pi\in \widehat{\mathbb G}} \sum_{1\leq i \leq d_{\pi}}  \sum_{\phi\in \mathcal{G}_{\pi,i}} |\langle P_{\pi,i}x, \phi\rangle|^2\leq \sum_{\pi\in \widehat{\mathbb G}} \sum_{1\leq i \leq d_{\pi}}  (1+\delta)\|P_{\pi,i}x\|_2^2= (1+\delta)\|x\|_2^2.
\end{align*}
The lower bound follows by a similar argument. Therefore, $\mathcal{G}$ has the restricted isometry property of order $K$ with parameter $\delta$, with respect to the basis $\mathcal{B}$.
\end{proof}

\subsection{Comparison with previous work: frames for the permutahedron}\label{subsec:relation-to-paper}
In \cite[Equation (11)]{paper}, suitable frames for the permutahedron were proposed. The same frames can be produced as an immediate application of Theorem \ref{thm: frame} to the orthonormal eigenbasis of Proposition~\ref{prop:eigenvector}. Namely, Proposition~\ref{prop:eigenvector}, when applied to the case of the permutahedron ${\mathbb P}_n$, provides us with a second method for obtaining the decompositions in \cite[Proposition 1]{paper}. 
A precise definition of the permutahedron is given in Subsection~\ref{subsec:eigen-decomp of A}.
We note that, unlike the methods in \cite{paper}, our approach does not make use of equitable partitions and Schreier graphs. Instead, we need a full description (in matrix form) of all irreducible representations of $\bbS_n$ to obtain a concrete description of the eigenvectors of the adjacency matrix of ${\mathbb P}_n$ (see Example~\ref{example:P3} and Example~\ref{example:P4}).
Here, we focus on the relation between our notations and results and those of \cite{paper}.

Let $\pi$ be an irreducible representation of $\bbS_n$ associated with a partition $\gamma$ of $[n]$. Then the space $W_\gamma$ given in \cite[Equation (2)]{paper} is simply the subspace of $\ell^2(\bbS_n)$ containing all coefficient functions associated with $\pi$, namely, 
$$W_\gamma=\oplus_{i=1}^{d_\pi}{\cal E}_{\pi,i}.$$
Fix an eigenvalue $\lambda$ of the adjacency matrix $A_{{\mathbb P}_n}$. Proposition~\ref{prop:eigenvector} $(ii)$ states that every $\lambda$-eigenvector $\phi\in \bbC^{|\bbS_n|}$ of $A_{{\mathbb P}_n}$ can be written as
$$\phi=B(\oplus_{\sigma\in \widehat{\bbS_n}}\oplus_{i=1}^{d_\sigma} X_{\sigma,i}),$$
where $B$ is the unitary matrix of normalized coefficient functions given in \eqref{eq:matrix-B}, and every $X_{\sigma,i}\in \bbC^{d_\sigma}$ is either 0 or a $\lambda$-eigenvector for $\sigma(S)=\sum_{s\in S}\sigma(s)$. Thus $\phi\in W_\gamma$ precisely when $X_{\sigma,i}=0$ whenever $\sigma\neq \pi$.
So, a $\lambda$-eigenvector $\phi$ of $A_{{\mathbb P}_n}$ belongs to $W_\gamma$ if and only if  it can be written as 
$$\Theta_{\pi,1}(X_1)+\Theta_{\pi,2}(X_2)+\ldots +\Theta_{\pi,d_\pi}(X_{d_\pi}),$$ 
for $X_1,\ldots, X_{d_\pi}\in E_\lambda(\pi(S))$  where at least one of the $X_i$'s is nonzero. Finally, by Remark~\ref{rem:theta} $(ii)$,
we have $\Theta_{\pi,j}(X_j)\in Z_{\pi,j,\lambda}$ for each $1\leq j\leq d_\pi$.
The space $U_{\lambda}$ given in \cite[Equation (6)]{paper} is defined as the $\lambda$-eigenspace of the adjacency matrix $A_{\mathbb P_n}$.  
The above discussion shows that the space $Z_{\gamma,\lambda}=W_{\gamma}\cap U_{\lambda}$ in \cite[Proposition 1]{paper} is exactly the space $Z_{\pi,\lambda}=\oplus_{i=1}^{d_\pi} Z_{\pi,i,\lambda}$, where $Z_{\pi,i,\lambda}$ is given in Definition \ref{def:frame-compatible}$(ii)$.

\section{Future work}
\red{In this paper, we provide a complete description of the eigen-decomposition of the adjacency matrix of a weighted Cayley graph in terms of the irreducible representations of its underlying group (Proposition~\ref{prop:eigenvector}). 
Leveraging this eigen-decomposition, we characterize all weighted Cayley frames (with respect to a given weight function) of $\ell^2(\bbG)$, and offer a concrete method for constructing such frames (Theorem~\ref{thm: frame}).
In a weighted Cayley frame, each atom is associated exclusively with the coefficient space of a single irreducible representation of the underlying group.
So, essentially, we obtain bases/frames that are efficient in the Fourier domain, analogous to the role played by the classical Fourier basis in classical signal processing.

We believe that this choice of eigenbasis/frame, informed by the representation theory of the underlying group, leads to more efficient tools for signal processing. As a follow-up to this work, we plan to carry out numerical experiments or simulations to explore the following scenarios.
\begin{itemize}
\item[(i)] It is reasonable to expect that our representation-theoretic approach should be especially beneficial when a repeated eigenvalue is associated with two different irreducible representations. With terminology of Proposition~\ref{prop:eigenvector}, this means that the value $\lambda$ is an eigenvalue for both $\pi(w)$ and $\sigma(w)$, with $\pi,\sigma\in\widehat{\bbG}$. We plan to run numerical experiments to compare the efficiency of our construction (where the eigenvalue $\lambda$ associated with each representation $\pi$ and $\sigma$ is treated separately) with general GSP constructions in the literature.

\item[(ii)] In the context of ranked data analysis, the generating set captures a notion of ``closeness'' between rankings. It is reasonable to assume that in different voting contexts, it might be useful to choose a generating set different from that of the permutahedron. In particular, we are interested in analyzing scenarios where the top positions in the ranking are considered to be much more important than the bottom positions. This could be the case where $n$ candidates are ranked for a single job opening. In that case, two rankings that differ only in the positions $n-1$ and $n$ are intuitively considered to be more similar than rankings in which positions 1 and 2 are switched. We plan to pursue this direction in future work. 
We have plans to analyze various data sets in the context of ranked data to see
how varying the generating set affects the interpretation of the data and the efficacy of the developed GSP tools.
\end{itemize}
}

\section*{Acknowledgements}
The authors thank the Department of Mathematical Sciences at University of Delaware for their continued support throughout the process of this research. 
This project was initiated during a summer research program funded by University of Delaware
UNIDEL summer fellowship. The first author thanks University of Delaware for funding her UNIDEL Ph.D.~project in summer 2021. The second author acknowledges support from National Science Foundation Grant DMS-1902301 and DMS-2408008 during the preparation of this article.
\red{We sincerely thank the anonymous reviewers for critically reading the manuscript and suggesting substantial improvements. }


\appendix
\section{Proof of Proposition \ref{prop:eigenvector}} \label{Appendix:proofs}

\begin{proof}
%
First note that for every $\pi\in \widehat{\bbG}$, the $d_\pi\times d_\pi$ matrix $\pi(w)$ is Hermitian, because 
$$\pi(w)^*=\left(\sum_{x\in \bbG}w(x)\pi(x)\right)^*=\sum_{x\in \bbG}\overline{w(x)}\pi(x^{-1})=\sum_{x\in \bbG}w(x)\pi(x)=\pi(w).$$
Hence $\pi(w)$ is unitarily diagonalizable. That is,  $\bbC^{d_\pi}$ admits an orthonormal basis consisting of eigenvectors of $\pi(w)$.

The adjacency matrix $A_G$ can be written in terms of the right regular representation as follows:
\begin{equation}\label{eq1:proof}
A_G=\sum_{x\in \bbG} w(x)\rho(x).
\end{equation}
By the Frobenius--Schur theorem, there exists a unitary matrix $B$ such that $\rho(g)$ can be block diagonalized. Namely,
\begin{equation}\label{eq2:proof}
B^{-1}\rho(g)B=\bigoplus_{\pi\in \widehat{\bbG}}d_\pi\cdot\pi(g)\ \ \forall g\in\bbG,
\end{equation}
where $d_\pi$ denotes the dimension of $\pi$, and $d_\pi\cdot\pi(g)$ denotes the direct sum of $d_\pi$ copies of $\pi(g)$.
Putting \eqref{eq1:proof} and \eqref{eq2:proof} together, we get a block diagonalization of $A_G$ as follows:
\begin{equation}\label{eq:thm-directsum}
B^{-1}A_GB= B^{-1}\left(\sum_{x\in \bbG} w(x)\rho(x)\right)B=\bigoplus_{\pi\in \widehat{\bbG}}d_\pi\cdot\left(\sum_{x\in \bbG}w(x)\pi(x)\right)=\bigoplus_{\pi\in \widehat{\bbG}}d_\pi\cdot\pi(w).
\end{equation}
Let $N$ denote the size of $\bbG$, and consider $\phi\in \bbC^N$.
From \eqref{eq:thm-directsum}, the vector $\phi$ is a $\lambda$-eigenvector of $A_G$ precisely when $B^{-1}\phi$ is a $\lambda$-eigenvector of 
$\bigoplus_{\pi\in \widehat{\bbG}}d_\pi\cdot\pi(w)$, because 
\begin{equation}\label{eq:thm-eigen-iff}
\lambda(B^{-1}\phi)=B^{-1}(\lambda\phi)= B^{-1}A_G\phi=\left(\bigoplus_{\pi\in \widehat{\bbG}}d_\pi\cdot \pi(w)\right)B^{-1}\phi.
\end{equation}
This finishes the proof of $(i)$, because the spectrum of the block diagonal matrix $\bigoplus_{\pi\in \widehat{\bbG}}d_\pi\cdot \pi(w)$ is the collection of eigenvalues of its blocks. 

To prove $(ii)$, consider the block partition of the vector $B^{-1}\phi\in \bbC^N$ that is compatible with the decomposition $\bigoplus_{\pi\in \widehat{\bbG}}d_\pi\cdot\pi(w)$, i.e.~$B^{-1}\phi=\oplus_{\pi\in \widehat{\bbG}}\oplus_{i=1}^{d_\pi} X_{\pi,i},$ with $X_{\pi,i}\in \bbC^{d_\pi}$.
So by \eqref{eq:thm-eigen-iff}, the vector $\phi$ is a $\lambda$-eigenvector of $A_G$ if and only if $\lambda X_{\pi,i}=\pi(w)X_{\pi,i}$ for every $\pi\in\widehat{\bbG}$ and every appropriate $i$. That is, $\phi$ is a $\lambda$-eigenvector of $A_G$ if and only if the following two conditions hold:
\begin{itemize}
    \item[(a)] for every $\pi\in\widehat{\bbG}$ and $1\leq i\leq d_\pi$, either $X_{\pi,i}=0$ or $X_{\pi,i}$ is a $\lambda$-eigenvector of $\pi(w)$;
    \item[(b)] there exists some $\pi\in\widehat{\bbG}$ and $1\leq i\leq d_\pi$ such that $X_{\pi,i}\neq 0$.
\end{itemize}
By part $(i)$, every matrix $\pi(w)$ is diagonalizable. For every $\pi\in\widehat{\bbG}$, let $\{V_1^\pi,\ldots, V_{d_\pi}^\pi\}\subseteq\bbC^{d_\pi}$ be an orthonormal eigenbasis for $\pi(w)$. 
Every fixed eigenvector $V_j^\pi$ of $\pi(w)$ can be naturally embedded in $\oplus_{\pi\in \widehat{\bbG}}\oplus_{i=1}^{d_\pi} \bbC^{d_\pi}$ in $d_\pi$-many ways; we denote these embedded versions of $V^{\pi}_j$ by $V_j^{\pi,i}\in \bbC^N$, where $i=1,\ldots, d_\pi$. 
More precisely, for $i=1,\ldots, d_\pi$, we define
$$V_j^{\pi,i}=\oplus_{\sigma\in \widehat{\bbG}}\oplus_{k=1}^{d_\sigma} Y_{\sigma,k}, \mbox{ where } 
Y_{\sigma,k}=\left\{
\begin{array}{cc}
V_j^{\pi}   & \mbox{ if } \sigma=\pi \mbox{ and } k=i \\
   0  & \mbox{ otherwise}
\end{array}\right..$$
From the definition of the vectors $V_j^{\pi,i}$, it is easy to see that the set $\bigcup_{\pi\in\widehat{\bbG}}\left\{V_j^{\pi,i}:\, 1\leq i,j\leq d_{\pi}\right\}$ is an orthonormal set in $\bbC^N$. Moreover, this set has $\sum_{\pi\in\widehat{\bbG}} d_\pi^2=N$ elements; so it forms an orthonormal basis for $\bbC^N$. Finally, since $B$ is a unitary matrix, we obtain the following orthonormal basis of $\bbC^N$:
$${\cal B}:= \bigcup_{\pi\in\widehat{\bbG}}\left\{BV_j^{\pi,i}:\ 1\leq i,j\leq d_{\pi}\right\}.$$ 
By properties (a) and (b) above, the collection ${\cal B}$ is indeed an orthonormal basis consisting of eigenvectors of $A_G$. This finishes the proof of $(ii)$.

Given the explicit form of the matrix $B$ (see \eqref{eq:matrix-B}), we observe that 
$$BV_j^{\pi,i}=\sqrt{\frac{d_\pi}{|\bbG|}}\Big[\pi_{1,i}|\pi_{2,i}|\ldots|\pi_{d_\pi,i}\Big]V_j^{\pi}=\sqrt{\frac{d_\pi}{|\bbG|}}\sum_{k=1}^{d_\pi}x_k\pi_{k,i},$$
where $V_j^{\pi}=\left[x_1, \cdots, x_{d_{\pi}}\right]^{\t}$.
\end{proof}

\section{Representations of $\bbS_n$}\label{Appendix:rep-Sn}
The symmetric group $\bbS_n$ is the group of all permutations on $n$ elements with composition of permutations as the group operation. As we use the irreducible representations of $\mathbb S_n$ in our approach, we recall that a finite group's irreducible representations are in one-to-one correspondence with its conjugacy classes \cite[Proposition 2.30]{FultonHarris}. For $\mathbb{S}_n$, the conjugacy classes are determined by the \emph{cycle type} of the permutations, where the cycle type describes the number of cycles and their lengths in the unique cycle decomposition of a permutation. 
%
Therefore, the conjugacy classes of $\mathbb S_n$ are in bijective correspondence with the partitions $\lambda \vdash n$ of $n$.

A useful accounting technique for partitions is to represent them as a collection of rows and columns of boxes, called a Young diagram. A \emph{Young diagram} with shape $\lambda = (\lambda_1,\ldots,\lambda_k)$, where $\{\lambda_i\}_{i=1}^k$ is in non-increasing order, has $\lambda_i$ boxes in its $i$th row, for every $1\leq i\leq k$. Thus, the number of Young diagrams with $n$ blocks is exactly the number of partitions of $n$. A Young diagram can be extended to a \emph{Young tableau}, which is a Young diagram on $n$ blocks where each block is uniquely labeled from the set $\{1,2,\dots, n-1,n\}$. A Young tableau is said to be in \emph{standard form} if the labels in each row increase from left to right and the labels in each column increase from top to bottom. While Young diagrams correspond with irreducible representations of $\bbS_n$, the number of Young tableaux for a given Young diagram (or, partition $\lambda \vdash n$) is the dimension of the corresponding irreducible representation of $\bbS_n$.

%

%

\subsection{Representations of $\bbS_3$ in matrix form and their matrix coefficients}\label{Appendix:rep-S3}
To write functions on $\bbS_3$ as vectors in $\bbC^6$, we order elements of $\bbS_3$ as follows:  $\id,(12),(23),(13),(123),(132)$. 
Below we have provided all irreducible representations of $\bbS_3$, expressed in matrix form, as well as the coefficient functions of each representation.
\begin{itemize}
\item[(i)] The {\bf trivial representation} of $\bbS_3$, denoted by $\iota$, is the 1-dimensional representation that maps every element of $\bbS_3$
to 1.  This representation is associated with the partition $3,0,0$ of $n=3$. The unique coefficient function of $\iota$ is given by $\iota_{1,1}=\left[1,1,1,1,1,1\right]^{\t}.$

\item[(ii)] The {\bf alternating representation} of $\bbS_3$, denoted by $\tau$, is the 1-dimensional representation that maps $\sigma\in \bbS_3$ to the sign of the permutation. This representation is associated with the partition $1,1,1$ of $n=3$.
The unique coefficient function of $\tau$ is given by $\tau_{1,1}=\left[1,-1,-1,-1,1,1\right]^{\t}.$
\item[(iii)]  The {\bf standard representation} of $\bbS_3$ is the 2-dimensional irreducible representation $\pi:\bbS_3\rightarrow U_2(\C)$ defined as
follows:
$$
    \pi(\id)=\begin{bmatrix} 1 & 0\\ 0 & 1\end{bmatrix}, \ 
    \pi((12))=\begin{bmatrix} -\frac{1}{2} & \frac{\sqrt{3}}{2}\\ \frac{\sqrt{3}}{2} & \frac{1}{2}\end{bmatrix}, \ 
    \pi((23))=\begin{bmatrix} 1 & 0\\ 0 & -1\end{bmatrix}.
$$
Since $\pi$ is multiplicative, and we have $(13)=(12)(23)(12)$ and $(123)=(12)(23)$, the above matrices are enough to define $\pi$ on $\bbS_3$.
This representation is associated with the partition $2,1,0$ of $n=3$.
The normalized coefficient functions of $\pi$ are
$$
    \sqrt{\frac13}\pi_{1,1}=\begin{bmatrix}\frac{1}{\sqrt{3}}\\ -\frac{1}{2\sqrt{3}}\\ \frac{1}{\sqrt{3}}\\ -\frac{1}{2\sqrt{3}}\\ -\frac{1}{2\sqrt{3}}\\-\frac{1}{2\sqrt{3}}\end{bmatrix}, \ 
    \sqrt{\frac13}\pi_{2,1}=\begin{bmatrix} 0\\ \frac{1}{2}\\ 0\\ -\frac{1}{2}\\ -\frac{1}{2}\\ \frac{1}{2}\end{bmatrix}, \
    \sqrt{\frac13}\pi_{1,2}=\begin{bmatrix} 0\\ \frac{1}{2}\\ 0\\ -\frac{1}{2}\\ \frac{1}{2}\\ -\frac{1}{2}\end{bmatrix}, \ 
    \sqrt{\frac13}\pi_{2,2}=\begin{bmatrix} \frac{1}{\sqrt{3}}\\ \frac{1}{2\sqrt{3}}\\ -\frac{1}{\sqrt{3}}\\ \frac{1}{2\sqrt{3}}\\ -\frac{1}{2\sqrt{3}}\\ -\frac{1}{2\sqrt{3}}\end{bmatrix}.
$$
\end{itemize}

\subsection{Representations of $\bbS_4$ in matrix form and their matrix coefficients} \label{Appendix:rep-S4}
In order to identify functions on $\bbS_4$ with vectors of size 24, we order elements of $\bbS_4$ as follows:  
$$\id,(12),(23),(34),(13),(14),(24),(12)(34),(13)(24),(14)(23),(123),(132),(124),(142),(134),$$
$$(143),(234),(243),(1234),(1432),(1423),(1342),(1324),(1243).$$
Below, we describe the irreducible unitary representations of $\bbS_4$. Since the elements $(12),(23),$ and $(34)$ generate the group, we first describe the explicit matrix representations with respect to these generating elements. The remaining matrix representations are given in Table \ref{table:unitary-rep}.

\begin{itemize}
\item[(i)] The {\bf trivial representation} of $\bbS_4$, denoted by $\iota$, is the 1-dimensional representation that maps every element of $\bbS_4$
to 1.  This representation is associated with the partition $4,0,0,0$ of $n=4$. 
\item[(ii)] The {\bf alternating representation} of $\bbS_4$, denoted by $\tau$, is the 1-dimensional representation that maps $\sigma\in \bbS_4$ to the sign of the permutation. This representation is associated with the partition $1,1,1,1$ of $n=4$.
\item[(iii)]  The {\bf 2-dimensional representation} of $\bbS_4$ denoted by $\sigma$ is defined as follows:
$$\sigma (\id)=
\begin{bmatrix}
1& 0\\
0& 1
\end{bmatrix}, \ 
\sigma ((12))=
\begin{bmatrix}
0& 1\\
1& 0
\end{bmatrix}, \ 
\sigma ((23))=
\begin{bmatrix}
0& \omega ^2\\
\omega& 0
\end{bmatrix}, \ 
\sigma ((34))=
\begin{bmatrix}
0& 1\\
1& 0
\end{bmatrix},
$$
where $w$ is the third root of unity.
This representation is associated with the partition $2,2,0,0$ of $n=4$. 
\item[(iv)]  The {\bf standard-alternating tensor representation} of $\bbS_4$, denoted by $\pi$, is 3-dimensional and is defined as follows:
$$
\pi (\id)=
\begin{bmatrix}
1& 0& 0\\
0& 1& 0\\
0& 0& 1
\end{bmatrix}, \ 
\pi ((12))=
\begin{bmatrix}
-1& 0& 0\\
0& 0& 1\\
0& 1& 0
\end{bmatrix}, \ 
\pi ((23))=
\begin{bmatrix}
0& 0& -1\\
0& -1& 0\\
-1& 0& 0
\end{bmatrix}, \ 
\pi ((34))=
\begin{bmatrix}
-1& 0& 0\\
0& 0& -1\\
0& -1& 0
\end{bmatrix}.
$$
This representation is associated with the partition $2, 1, 1, 0$ of $n=4$. 
\item[(v)]  The {\bf standard representation} of $\bbS_4$, denoted by $\pi'$, is 3-dimensional and is defined as follows:
$$
\pi' (\id)=
\begin{bmatrix}
1& 0& 0\\
0& 1& 0\\
0& 0& 1
\end{bmatrix}, \ 
\pi' ((12))=
\begin{bmatrix}
1& 0& 0\\
0& 0& -1\\
0& -1& 0
\end{bmatrix}, \ 
\pi' ((23))=
\begin{bmatrix}
0& 0& 1\\
0& 1& 0\\
1& 0& 0
\end{bmatrix}, \ 
\pi' ((34))=
\begin{bmatrix}
1& 0& 0\\
0& 0& 1\\
0& 1& 0
\end{bmatrix}.
$$
This representation is associated with the partition $3, 1, 0, 0$ of $n=4$. 
\end{itemize}

In Table \ref{table:unitary-rep}, we summarize the above unitary irreducible representations of $\bbS_4$, and provide the explicit matrix forms for all group elements. Here, $\omega$ denotes the third root of unity. In Table \ref{table:unitary-rep-matrix-coef} we provide the matrix coefficients of the unitary representations of $\bbS_4$.

%
%

\begin{table}[h]
\centering
{
$\begin{array}{|c|c|c|c|c|c|}
\hline
\mathbf{\bbS_4}& \boldsymbol\iota& \boldsymbol\tau& \boldsymbol\sigma& \boldsymbol\pi& \boldsymbol\pi'\\
\hline
\textbf{id}& 1& 1& \left[\begin{smallmatrix}
1& 0\\ 0& 1 
\end{smallmatrix}\right]& \tiny\left[\begin{smallmatrix}
1& 0& 0\\ 0& 1& 0\\ 0& 0& 1
\end{smallmatrix}\right]& \tiny\left[\begin{smallmatrix}
1& 0& 0\\ 0& 1& 0\\ 0& 0& 1
\end{smallmatrix}\right]\\
\hline
\mathbf{(12)}& 1& -1& \left[\begin{smallmatrix}
0& 1\\ 1& 0 
\end{smallmatrix}\right]& \tiny\left[\begin{smallmatrix}
-1& 0& 0\\ 0& 0& 1\\ 0& 1& 0
\end{smallmatrix}\right]& \tiny\left[\begin{smallmatrix}
1& 0& 0\\ 0& 0& -1\\ 0& -1& 0
\end{smallmatrix}\right]\\
\hline
\mathbf{(23)}& 1& -1& \left[\begin{smallmatrix}
0& \omega ^2\\ \omega& 0
\end{smallmatrix}\right]& \tiny\left[\begin{smallmatrix}
0& 0& -1\\ 0& -1& 0\\ -1& 0& 0 \end{smallmatrix}\right]&
\tiny\left[\begin{smallmatrix}
0& 0& 1\\ 0& 1& 0\\ 1& 0& 0  \end{smallmatrix}\right]\\
\hline
\mathbf{(34)}& 1& -1& \left[\begin{smallmatrix}
0& 1\\ 1& 0
\end{smallmatrix}\right]& \tiny\left[\begin{smallmatrix}
-1& 0& 0\\ 0& 0& -1\\ 0& -1& 0 \end{smallmatrix}\right]&
\tiny\left[\begin{smallmatrix}
1& 0& 0\\ 0& 0& 1\\ 0& 1& 0  \end{smallmatrix}\right]\\
\hline
\mathbf{(13)}& 1& -1& \left[\begin{smallmatrix}
0& \omega\\ \omega ^2& 0
\end{smallmatrix}\right]& \tiny\left[\begin{smallmatrix}
0& 1& 0\\ 1& 0& 0\\ 0& 0& -1 \end{smallmatrix}\right]&
\tiny\left[\begin{smallmatrix}
0& -1& 0\\ -1& 0& 0\\ 0& 0& 1  \end{smallmatrix}\right]\\
\hline
\mathbf{(14)}& 1& -1& \left[\begin{smallmatrix}
0& \omega ^2\\ \omega& 0
\end{smallmatrix}\right]& \tiny\left[\begin{smallmatrix}
0& 0& 1\\ 0& -1& 0\\ 1& 0& 0 \end{smallmatrix}\right]&
\tiny\left[\begin{smallmatrix}
0& 0& -1\\ 0& 1& 0\\ -1& 0& 0 \end{smallmatrix}\right]\\
\hline
\mathbf{(24)}& 1& -1& \left[\begin{smallmatrix}
0& \omega\\ \omega ^2& 0
\end{smallmatrix}\right]& \tiny\left[\begin{smallmatrix}
0& -1& 0\\ -1& 0& 0\\ 0& 0& -1 \end{smallmatrix}\right]&
\tiny\left[\begin{smallmatrix}
0& 1& 0\\ 1& 0& 0\\ 0& 0& 1 \end{smallmatrix}\right]\\
\hline
\mathbf{(12)(34)}& 1& 1& \left[\begin{smallmatrix}
1& 0\\ 0& 1
\end{smallmatrix}\right]& \tiny\left[\begin{smallmatrix}
1& 0& 0\\ 0& -1& 0\\ 0& 0& -1 \end{smallmatrix}\right]&
\tiny\left[\begin{smallmatrix}
1& 0& 0\\ 0& -1& 0\\ 0& 0& -1 \end{smallmatrix}\right]\\
\hline
\mathbf{(13)(24)}& 1& 1& \left[\begin{smallmatrix}
1& 0\\ 0& 1
\end{smallmatrix}\right]& \tiny\left[\begin{smallmatrix}
-1& 0& 0\\ 0& -1& 0\\ 0& 0& 1 \end{smallmatrix}\right]&
\tiny\left[\begin{smallmatrix}
-1& 0& 0\\ 0& -1& 0\\ 0& 0& 1 \end{smallmatrix}\right]\\
\hline
\mathbf{(14)(23)}& 1& 1& \left[\begin{smallmatrix}
1& 0\\ 0& 1
\end{smallmatrix}\right]& \tiny\left[\begin{smallmatrix}
-1& 0& 0\\ 0& 1& 0\\ 0& 0& -1 \end{smallmatrix}\right]&
\tiny\left[\begin{smallmatrix}
-1& 0& 0\\ 0& 1& 0\\ 0& 0& -1 \end{smallmatrix}\right]\\
\hline
\mathbf{(123)}& 1& 1& \left[\begin{smallmatrix}
\omega& 0\\ 0& \omega ^2
\end{smallmatrix}\right]& \tiny\left[\begin{smallmatrix}
0& 0& 1\\ -1& 0& 0\\ 0& -1& 0 \end{smallmatrix}\right]&
\tiny\left[\begin{smallmatrix}
0& 0& 1\\ -1& 0& 0\\ 0& -1& 0 \end{smallmatrix}\right]\\
\hline
\mathbf{(132)}& 1& 1& \left[\begin{smallmatrix}
\omega ^2& 0\\ 0& \omega
\end{smallmatrix}\right]& \tiny\left[\begin{smallmatrix}
0& -1& 0\\ 0& 0& -1\\ 1& 0& 0 \end{smallmatrix}\right]&
\tiny\left[\begin{smallmatrix}
0& -1& 0\\ 0& 0& -1\\ 1& 0& 0 \end{smallmatrix}\right]\\
\hline
\mathbf{(124)}& 1& 1& \left[\begin{smallmatrix}
\omega ^2& 0\\ 0& \omega
\end{smallmatrix}\right]& \tiny\left[\begin{smallmatrix}
0& 1& 0\\ 0& 0& -1\\ -1& 0& 0 \end{smallmatrix}\right]&
\tiny\left[\begin{smallmatrix}
0& 1& 0\\ 0& 0& -1\\ -1& 0& 0 \end{smallmatrix}\right]\\
\hline
\mathbf{(142)}& 1& 1& \left[\begin{smallmatrix}
\omega& 0\\ 0& \omega ^2
\end{smallmatrix}\right]& \tiny\left[\begin{smallmatrix}
0& 0& -1\\ 1& 0& 0\\ 0& -1& 0 \end{smallmatrix}\right]&
\tiny\left[\begin{smallmatrix}
0& 0& -1\\ 1& 0& 0\\ 0& -1& 0 \end{smallmatrix}\right]\\
\hline
\mathbf{(134)}& 1& 1& \left[\begin{smallmatrix}
\omega& 0\\ 0& \omega ^2
\end{smallmatrix}\right]& \tiny\left[\begin{smallmatrix}
0& 0& -1\\ -1& 0& 0\\ 0& 1& 0 \end{smallmatrix}\right]&
\tiny\left[\begin{smallmatrix}
0& 0& -1\\ -1& 0& 0\\ 0& 1& 0 \end{smallmatrix}\right]\\
\hline
\mathbf{(143)}& 1& 1& \left[\begin{smallmatrix}
\omega ^2& 0\\ 0& \omega
\end{smallmatrix}\right]& \tiny\left[\begin{smallmatrix}
0& -1& 0\\ 0& 0& 1\\ -1& 0& 0 \end{smallmatrix}\right]&
\tiny\left[\begin{smallmatrix}
0& -1& 0\\ 0& 0& 1\\ -1& 0& 0 \end{smallmatrix}\right]\\
\hline
\mathbf{(234)}& 1& 1& \left[\begin{smallmatrix}
\omega ^2& 0\\ 0& \omega
\end{smallmatrix}\right]& \tiny\left[\begin{smallmatrix}
0& 1& 0\\ 0& 0& 1\\ 1& 0& 0 \end{smallmatrix}\right]&
\tiny\left[\begin{smallmatrix}
0& 1& 0\\ 0& 0& 1\\ 1& 0& 0 \end{smallmatrix}\right]\\
\hline
\mathbf{(243)}& 1& 1& \left[\begin{smallmatrix}
\omega& 0\\ 0& \omega ^2
\end{smallmatrix}\right]& \tiny\left[\begin{smallmatrix}
0& 0& 1\\ 1& 0& 0\\ 0& 1& 0 \end{smallmatrix}\right]&
\tiny\left[\begin{smallmatrix}
0& 0& 1\\ 1& 0& 0\\ 0& 1& 0 \end{smallmatrix}\right]\\
\hline
\mathbf{(1234)}& 1& -1& \left[\begin{smallmatrix}
0& \omega\\ \omega ^2& 0
\end{smallmatrix}\right]& \tiny\left[\begin{smallmatrix}
0& -1& 0\\ 1& 0& 0\\ 0& 0& 1 \end{smallmatrix}\right]&
\tiny\left[\begin{smallmatrix}
0& 1& 0\\ -1& 0& 0\\ 0& 0& -1 \end{smallmatrix}\right]\\
\hline
\mathbf{(1432)}& 1& -1& \left[\begin{smallmatrix}
0& \omega\\ \omega ^2& 0
\end{smallmatrix}\right]& \tiny\left[\begin{smallmatrix}
0& 1& 0\\ -1& 0& 0\\ 0& 0& 1 \end{smallmatrix}\right]&
\tiny\left[\begin{smallmatrix}
0& -1& 0\\ 1& 0& 0\\ 0& 0& -1 \end{smallmatrix}\right]\\
\hline
\mathbf{(1423)}& 1& -1& \left[\begin{smallmatrix}
0& 1\\ 1& 0
\end{smallmatrix}\right]& \tiny\left[\begin{smallmatrix}
1& 0& 0\\ 0& 0& -1\\ 0& 1& 0 \end{smallmatrix}\right]&
\tiny\left[\begin{smallmatrix}
-1& 0& 0\\ 0& 0& 1\\ 0& -1& 0 \end{smallmatrix}\right]\\
\hline
\mathbf{(1342)}& 1& -1& \left[\begin{smallmatrix}
0& \omega ^2\\ \omega& 0
\end{smallmatrix}\right]& \tiny\left[\begin{smallmatrix}
0& 0& 1\\ 0& 1& 0\\ -1& 0& 0 \end{smallmatrix}\right]&
\tiny\left[\begin{smallmatrix}
0& 0& -1\\ 0& -1& 0\\ 1& 0& 0 \end{smallmatrix}\right]\\
\hline
\mathbf{(1324)}& 1& -1& \left[\begin{smallmatrix}
0& 1\\ 1& 0
\end{smallmatrix}\right]& \tiny\left[\begin{smallmatrix}
1& 0& 0\\ 0& 0& 1\\ 0& -1& 0 \end{smallmatrix}\right]&
\tiny\left[\begin{smallmatrix}
-1& 0& 0\\ 0& 0& -1\\ 0& 1& 0 \end{smallmatrix}\right]\\
\hline
\mathbf{(1243)}& 1& -1& \left[\begin{smallmatrix}
0& \omega ^2\\ \omega& 0
\end{smallmatrix}\right]& \tiny\left[\begin{smallmatrix}
0& 0& -1\\ 0& 1& 0\\ 1& 0& 0 \end{smallmatrix}\right]&
\tiny\left[\begin{smallmatrix}
0& 0& 1\\ 0& -1& 0\\ -1& 0& 0 \end{smallmatrix}\right]\\
\hline
\end{array}$
}
\caption{The unitary representations of $\bbS_4$ in matrix form}
\label{table:unitary-rep}
\end{table}
\FloatBarrier

%
\begin{table}
\resizebox{15cm}{5cm}
{
$\begin{array}{|c|c|c|c|c|c|c|c|c|c|c|c|c|c|c|c|c|c|c|c|c|c|c|c|c|}
\hline
\mathbf{\bbS_4}& \boldsymbol\iota_{1,1}& \boldsymbol\tau_{1,1}& \boldsymbol\sigma_{1,1}& \boldsymbol\sigma_{2,1}& \boldsymbol\sigma_{1,2}& \boldsymbol\sigma_{2,2}& \boldsymbol\pi_{1,1}& \boldsymbol\pi_{2,1}& \boldsymbol\pi_{3,1}& \boldsymbol\pi_{1,2}& \boldsymbol\pi_{2,2}& \boldsymbol\pi_{3,2}& \boldsymbol\pi_{1,3}& \boldsymbol\pi_{2,3}& \boldsymbol\pi_{3,3}& \boldsymbol\pi'_{1,1}& \boldsymbol\pi'_{2,1}& \boldsymbol\pi'_{3,1}& \boldsymbol\pi'_{1,2}& \boldsymbol\pi'_{2,2}& \boldsymbol\pi'_{3,2}& \boldsymbol\pi'_{1,3}& \boldsymbol\pi'_{2,3}& \boldsymbol\pi'_{3,3}\\
\hline
\textbf{id}& 1& 1& 1& 0& 0& 1& 1& 0& 0& 0& 1& 0& 0& 0& 1& 1& 0& 0& 0& 1& 0& 0& 0& 1\\
\hline
\mathbf{(12)}& 1& -1& 0& 1& 1& 0& -1& 0& 0& 0& 0& 1& 0& 1& 0& 1& 0& 0& 0& 0& -1& 0& -1& 0\\
\hline
\mathbf{(23)}& 1& -1& 0& \omega ^2& \omega& 0& 0& 0& -1& 0& -1& 0& -1& 0& 0& 0& 0& 1& 0& 1& 0& 1& 0& 0\\
\hline
\mathbf{(34)}& 1& -1& 0& 1& 1& 0& -1& 0& 0& 0& 0& -1& 0& -1& 0& 1& 0& 0& 0& 0& 1& 0& 1& 0\\
\hline
\mathbf{(13)}& 1& -1& 0& \omega& \omega ^2& 0& 0& 1& 0& 1& 0& 0& 0& 0& -1& 0& -1& 0& -1& 0& 0& 0& 0& 1\\
\hline
\mathbf{(14)}& 1& -1& 0& \omega ^2& \omega& 0& 0& 0& 1& 0& -1& 0& 1& 0& 0& 0& 0& -1& 0& 1& 0& -1& 0& 0\\
\hline
\mathbf{(24)}& 1& -1& 0& \omega& \omega ^2& 0& 0& -1& 0& -1& 0& 0& 0& 0& -1& 0& 1& 0& 1& 0& 0& 0& 0& 1\\
\hline
\mathbf{(12)(34)}& 1& 1& 1& 0& 0& 1& 1& 0& 0& 0& -1& 0& 0& 0& -1& 1& 0& 0& 0& -1& 0& 0& 0& -1\\
\hline
\mathbf{(13)(24)}& 1& 1& 1& 0& 0& 1& -1& 0& 0& 0& -1& 0& 0& 0& 1& -1& 0& 0& 0& -1& 0& 0& 0& 1\\
\hline
\mathbf{(14)(23)}& 1& 1& 1& 0& 0& 1& -1& 0& 0& 0& 1& 0& 0& 0& -1& -1& 0& 0& 0& 1& 0& 0& 0& -1\\
\hline
\mathbf{(123)}& 1& 1& \omega& 0& 0& \omega ^2& 0& 0& 1& -1& 0& 0& 0& -1& 0& 0& 0& 1& -1& 0& 0& 0& -1& 0\\
\hline
\mathbf{(132)}& 1& 1& \omega ^2& 0& 0& \omega& 0& -1& 0& 0& 0& -1& 1& 0& 0& 0& -1& 0& 0& 0& -1& 1& 0& 0\\
\hline
\mathbf{(124)}& 1& 1& \omega ^2& 0& 0& \omega& 0& 1& 0& 0& 0& -1& -1& 0& 0& 0& 1& 0& 0& 0& -1& -1& 0& 0\\
\hline
\mathbf{(142)}& 1& 1& \omega& 0& 0& \omega ^2& 0& 0& -1& 1& 0& 0& 0& -1& 0& 0& 0& -1& 1& 0& 0& 0& -1& 0\\
\hline
\mathbf{(134)}& 1& 1& \omega& 0& 0& \omega ^2& 0& 0& -1& -1& 0& 0& 0& 1& 0& 0& 0& -1& -1& 0& 0& 0& 1& 0\\
\hline
\mathbf{(143)}& 1& 1& \omega ^2& 0& 0& \omega& 0& -1& 0& 0& 0& 1& -1& 0& 0& 0& -1& 0& 0& 0& 1& -1& 0& 0\\
\hline
\mathbf{(234)}& 1& 1& \omega^2& 0& 0& \omega& 0& 1& 0& 0& 0& 1& 1& 0& 0& 0& 1& 0& 0& 0& 1& 1& 0& 0\\
\hline
\mathbf{(243)}& 1& 1& \omega& 0& 0& \omega ^2& 0& 0& 1& 1& 0& 0& 0& 1& 0& 0& 0& 1& 1& 0& 0& 0& 1& 0\\
\hline
\mathbf{(1234)}& 1& -1& 0& \omega& \omega ^2& 0& 0& -1& 0& 1& 0& 0& 0& 0& 1& 0& 1& 0& -1& 0& 0& 0& 0& -1\\
\hline
\mathbf{(1432)}& 1& -1& 0& \omega& \omega ^2& 0& 0& 1& 0& -1& 0& 0& 0& 0& 1& 0& -1& 0& 1& 0& 0& 0& 0& -1\\
\hline
\mathbf{(1423)}& 1& -1& 0& 1& 1& 0& 1& 0& 0& 0& 0& -1& 0& 1& 0& -1& 0& 0& 0& 0& 1& 0& -1& 0\\
\hline
\mathbf{(1342)}& 1& -1& 0& \omega ^2& \omega& 0& 0& 0& 1& 0& 1& 0& -1& 0& 0& 0& 0& -1& 0& -1& 0& 1& 0& 0\\
\hline
\mathbf{(1324)}& 1& -1& 0& 1& 1& 0& 1& 0& 0& 0& 0& 1& 0& -1& 0& -1& 0& 0& 0& 0& -1& 0& 1& 0\\
\hline
\mathbf{(1243)}& 1& -1& 0& \omega ^2& \omega& 0& 0& 0& -1& 0& 1& 0& 1& 0& 0& 0& 0& 1& 0& -1& 0& -1& 0& 0\\
\hline
\end{array}$
}
\caption{The matrix coefficients of the unitary representations of $\bbS_4$}
\label{table:unitary-rep-matrix-coef}
\end{table}
\FloatBarrier

\newcommand{\etalchar}[1]{$^{#1}$}

\end{document}